\documentclass[11pt,a4paper]{article}

%\hbadness=10000
%\vbadness=10000
%
%\oddsidemargin=0.25in
%\evensidemargin=0.25in
%\topmargin=-0.4in
%\textheight=8.5in
%\textwidth=6in
\oddsidemargin=0.15in
\evensidemargin=0.15in
\topmargin=-.5in
\textheight=9in
\textwidth=6.25in

\pagestyle{plain}
\usepackage[dvips]{graphicx}
\usepackage{amsmath}
\usepackage{amsthm}
\usepackage{amssymb}
\usepackage{verbatim}
\usepackage{algorithmic}
\usepackage{algorithm}
\usepackage{enumerate}
\usepackage{framed}
\usepackage{marginnote}
\usepackage{color}

\usepackage{hyperref}

%\allowdisplaybreaks

\newtheorem{theorem}{Theorem}

\newtheorem{lemma}[theorem]{Lemma}

\newtheorem{myclaim}{Claim}

\newcommand{\bs}{\bar S}

\newcommand{\R}{\mathbb{R}}
\newcommand{\Q}{\mathbb{Q}}

\newcommand{\mk}{\textsc{MinKnap}}
\newcommand{\maxk}{\textsc{MaxKnap}}

\newcommand{\Zz}{\mathbb{Z}}
\newcommand{\N}{\mathbb{N}}

\newcommand{\qedd}{\hfill $\blacksquare$}

\usepackage{mathtools}
\DeclarePairedDelimiter\ceil{\lceil}{\rceil}

\newcommand*\samethanks[1][\value{footnote}]{\footnotemark[#1]}

\title{\textbf{On bounded pitch inequalities for the \\ min-knapsack polytope}}

\bigskip

\author{Yuri Faenza\thanks{IEOR, Columbia University, USA. Email: yf2414@columbia.edu} \and Igor Malinovi\'c\thanks{SB/IC, EPFL, Switzerland. Email: firstname.lastname@epfl.ch} \and Monaldo Mastrolilli\thanks{IDSIA, USI-SUPSI, Switzerland. Email: monaldo@idsia.ch} \and Ola Svensson\samethanks[2]}
\date{\today}
\begin{document}
\maketitle
\begin{abstract}
In the min-knapsack problem one aims at choosing a set of objects with minimum total cost and total profit above a given threshold. In this paper, we study a class of valid inequalities for min-knapsack known as bounded pitch inequalities, which generalize the well-known unweighted cover inequalities. While separating over pitch-1 inequalities is NP-hard, we show that approximate separation over the set of pitch-1 and pitch-2 inequalities can be done in polynomial time. We also investigate integrality gaps of linear relaxations for min-knapsack when these inequalities are added. Among other results, we show that, for any fixed $t$, the $t$-th CG closure of the natural linear relaxation has the unbounded integrality gap.

\bigskip

\noindent \textbf{Keywords:} bounded pitch, integrality gap, min-knapsack   
\end{abstract}

\section{Introduction}

The min-knapsack problem (\mk)
\begin{equation}\label{eq:mink}\min c^T x \; \; \hbox{ s.t. }\; \; p^T x \geq 1, \; x \in \{0,1\}^n
\end{equation}
is the variant of the max knapsack problem (\maxk) where, given a cost vector $c$ and a profit vector $p$, we want to minimize the total cost given a lower bound on the total profit. \mk \ is known to be NP-Complete, even when $p=c$. Moreover, it is easy to see that the classical FPTAS for \maxk \ \cite{Ibarra1975,Lawler1979} can be adapted to work for \mk, thus completely settling the complexity of \mk.

However, \emph{pure} knapsack problems rarely appear in applications. Hence, one aims at developing techniques that remain valid when less structured constraints are added on top of the original knapsack one. This can be achieved by providing \emph{strong} linear relaxations for the problem: then, any additional linear constraint can be added to the formulation, providing a good starting point for any branch-and-bound procedure. The most common way to measure the strength of a linear relaxation is by measuring its \emph{integrality gap}, i.e. the maximum ratio between the optimal solutions of the linear and the integer programs (or of its inverse if the problem is in minimization form) over all the objective functions.

Surprisingly, if we aim at obtaining linear relaxations with few inequalities and bounded integrality gap, \mk\ and \maxk\ seem to be very different. Indeed, the standard linear relaxation for \maxk\ has integrality gap 2, and this can be boosted to $(1+\epsilon)$ by an extended formulation with $
n^{\bar{O}(1/ \epsilon)}$ many variables and constraints, for $\epsilon>0$ \cite{Bienstock2008}. Conversely, the standard linear relaxation for \mk\, has unbounded integrality gap, and this remains true even after $\Theta(n)$ rounds of the Lasserre hierarchy \cite{Mastrolilli2015}. No linear relaxation for \mk \, with polynomially many constraints and constant integrality gap can be obtained in the original space \cite{Carsten2015}. It is an open problem whether an extended relaxation with this properties exists. Recent results showed the existence \cite{Bazzi2017} and gave an explicit construction \cite{Fiorini2017} of a linear relaxation for \mk\, of quasi-polynomial size with integrality gap $2+\epsilon$. This is obtained by giving an approximate formulation for \emph{Knapsack Cover inequalities} (KC) (see \cite{Carr2000} and the references therein). Adding those (exponentially many) inequalities gives an integrality gap of $2$, and can be approximately separated \cite{Carr2000}. The bound on the integrality gap is tight, even in the simpler case when $p=c$. One can then look for other classes of well-behaved inequalities that can be added to further reduce the integrality gap. A prominent family is given by the so called \emph{bounded pitch inequalities} \cite{Bienstock2006} defined in Section \ref{sec:basics}. Here, we remark that the \emph{pitch} is a parameter measuring the complexity of an inequality, and the associated separation problem is NP-Hard already for pitch-$1$. The pitch-$1$ inequalities are often known in the literature as \emph{unweighted cover inequalities} (see e.g. \cite{Bazzi2017}).

 %\footnote{{\color{blue}In case we want to add more context: Bienstock \& Zuckerberg showed that the $t$-th Chvátal-Gomory (CG) closure of any linear relaxation can be $(1+\epsilon)$-approximated by strengthening the relaxation with all valid inequalities of pitch at most $\theta(t/ \epsilon)$. Bienstock and Zuckerberg developed a strong hierarchy for $0/1$-covering problems. It was simplified in (Monaldo) by using an augmented version of SOS (or Sherali-Adams) hierarchy which is applicable to packing problems as well. The work by (Sam et al.) which gives the above mentioned quasi-polynomial sized relaxation for \mk\ further extends this line of research. They present a procedure for constructing extended formulations satisfying pitch-$k$ inequalities based on a boolean formula defining the target set of feasible $0/1$ solutions.}}

In this paper, we study structural properties and separability of bounded pitch inequalities for \mk, and the strength of linear relaxations for \mk \ when they are added. Let ${\cal F}$ be the set given by pitch-$1$, pitch-$2$, and inequalities from the linear relaxation of \eqref{eq:mink}.  We first show that, for any arbitrarily small precision, we can solve in polynomial time the \emph{weak separation problem} for the set ${\cal F}$. Even better, our algorithm either certifies that the given point $x^*$ violates an inequality from ${\cal F}$, or outputs a point that satisfies all inequalities from ${\cal F}$ and whose objective function value is arbitrarily close to that of $x^*$. We define such an algorithm as a \emph{$(1+\epsilon)$-oracle} in Section \ref{sec:basics}; see Section \ref{sec:separation} for the construction. A major step of our procedure is showing that non-redundant pitch-2 inequalities have a simple structure.

It is then a natural question whether bounded pitch inequalities can help to reduce the integrality gap below $2$. We show that, when $p=c$, if we add to the linear relaxation of \eqref{eq:mink} pitch-$1$ and pitch-$2$ inequalities, the integrality gap is bounded by $3/2$; see Section \ref{sec:p-c}. However, this is false in general. 
%we show that for each fixed $k$, the integrality gap may be arbitrarily close to $2$ even if all inequalities of pitch at most $k$ are added. 
Indeed, we also prove that KC plus bounded pitch inequalities do not improve upon the integrality gap of $2$; see Section \ref{sec:KC+bounded}. Moreover, bounded pitch alone can be much weaker than KC: we show that, for each fixed $k$, the integrality gap may be unbounded even if all pitch-$k$ inequalities are added. Using the relation between bounded pitch and Chv\'atal-Gomory (CG) closures established in \cite{Bienstock2006}, this implies that, for each fixed $t$, the integrality gap of the $t$-th CG closure can be unbounded; see Section \ref{sec:CG}. For an alternative proof that having all KC inequalities bounds the integrality gap to $2$ see Section \ref{sec:kc-gap-2}. %We show that bounded pitch inequalities cannot be used to improve the integrality gap in this case:

\section{Basics}\label{sec:basics}

A \mk \ instance is a binary optimization problem of the form \eqref{eq:mink},
where $p, c \in \Q^n$ and we assume $0\leq p_1 \leq p_2 \leq \dots \leq p_n \leq 1$, $0 < c_i \leq 1, \ \forall i \in [n]$. We will often deal with its \emph{natural linear relaxation}
\begin{equation}\label{eq:minkLR}\min c^T x \; \; \hbox{ s.t. }\; \; p^T x \geq 1, \; x \in [0,1]^n.\end{equation}
The NP-Hardness of \mk \ immediately follows from the fact that \maxk \ is NP-Hard \cite{Karp1972}, and that a \maxk \ instance
\begin{equation}\label{eq:maxk}\max v^T x \; \; \hbox{ s.t. }\; \; w^T x \leq 1, \; x \in \{0,1\}^n.\end{equation}
can be reduced into a \mk \ instance \eqref{eq:mink} as follows: each $x \in \{0,1\}^n$ is mapped via $\pi: \R^n \rightarrow \R^n$ with $\pi(x) = \mathbf{1}-x$; $p_i = \frac{w_i}{\sum_{i=1}^n w_i - 1}$ and $c_i=v_i$ for $i \in [n]$. Note that the reduction is not approximation-preserving.

We say that an inequality $w^T x \geq \beta$ with $w\geq 0$ is \emph{dominated} by a set of inequalities ${\cal F}$ if $w'^T x \geq \beta'$ can be written as a conic combination of inequalities in ${\cal F}$ for some $\beta' \geq \beta$ and $w' \geq w$. $w^T x \geq \beta$ is \emph{undominated} if any set of valid inequalities dominating $w^T x \geq \beta$ contains a positive multiple of it.

Consider a family ${\cal F}$ of inequalities valid for \eqref{eq:mink}. We refer to \cite{GLS} for the definition of \emph{weak separation oracle}, which is not used in this paper. We say that ${\cal F}$ admits a \emph{$(1+\epsilon)$-oracle} if, for each fixed $\epsilon>0$, there exists an algorithm that takes as input a point $\bar x$ and, in time polynomial in $n$, either outputs an inequality from ${\cal F}$ that is violated by $\bar x$, or outputs a point $\bar y$, $\bar x \leq \bar y \leq (1+\epsilon)\bar x$ that satisfies all inequalities in $\cal F$. In particular, if ${\cal F}$ contains the linear relaxation of \eqref{eq:mink}, $0\leq \bar y \leq 1$. %We say that the oracle is a \emph{strongly polynomial} $\epsilon$-separation oracle if there exists an algorithm that, for each $\epsilon >0$ is a separation oracle and has running time polynomial in $n$ and in $1/\epsilon$.

Let $\sum_{i \in T} w_i x_i \geq \beta $ be a valid inequality for \eqref{eq:mink}, with $w_i > 0$ for all $i \in T$. Its \emph{pitch} is the minimum $k$ such that, for each $I \subseteq T$ with $|I|=k$, we have $\sum_{i \in I} w_i \geq \beta $. Undominated pitch-1 inequalities are of the form $\sum_{i \in T} x_i \geq 1$. Note that the map from \maxk \ to \mk \ instances defined above gives a bijection between \emph{minimal cover inequalities}
$$ \sum_{i \in I} x_i \leq |I|-1$$
for \maxk \ and undominated pitch-1 inequalities for the corresponding \mk \ instance. Since, given a \maxk \ instance, it is NP-Hard to separate minimal cover inequalities \cite{Klabjan1998}, we conclude the following.

\begin{theorem}
It is NP-Hard to decide whether a given point satisfies all valid pitch-1 inequalities for a given \mk \ instance.
\end{theorem}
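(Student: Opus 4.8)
The plan is to reduce the (NP-hard) separation problem for minimal cover inequalities of \maxk \ to the decision problem in the statement, exploiting the bijection induced by the map $\pi$ already recorded in the excerpt. The only genuine work beyond quoting that bijection is to replace the quantifier over \emph{all} valid pitch-1 inequalities by a quantifier over the undominated ones, whose clean form $\sum_{i \in T} x_i \geq 1$ is exactly what the bijection speaks about.

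First I would show that, for any point $\bar x \geq 0$, $\bar x$ satisfies every valid pitch-1 inequality if and only if it satisfies every \emph{valid} inequality of the form $\sum_{i \in T} x_i \geq 1$. The nontrivial direction is to pass from the latter to a general valid pitch-1 inequality $\sum_{i \in T} w_i x_i \geq \beta$. By definition of pitch, $w_i \geq \beta$ for every $i \in T$, and one checks that the associated cover inequality $\sum_{i \in T} x_i \geq 1$ is itself valid (any feasible integer point must set some $x_i = 1$ with $i \in T$, as otherwise the left-hand side vanishes while $\beta > 0$). Using $\bar x \geq 0$ and $w_i \geq \beta$, we then get $\sum_{i \in T} w_i \bar x_i \geq \beta \sum_{i \in T} \bar x_i \geq \beta$, so the weighted inequality holds whenever the cover one does. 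A further monotonicity remark shows that satisfying all valid cover inequalities is in turn equivalent to satisfying only the \emph{minimal} (i.e.\ undominated) ones, since for $\bar x \geq 0$ a cover over $T' \supseteq T$ is implied by the cover over $T$. This collapses the decision problem to: does $\bar x$ satisfy all undominated pitch-1 inequalities?

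Next I would invoke the bijection. Given a \maxk \ instance and a candidate point $y^*$ for cover separation, I build in polynomial time the corresponding \mk \ instance ($p_i = w_i / (\sum_j w_j - 1)$, $c_i = v_i$) and the point $x^* = \mathbf{1} - y^*$. A one-line computation gives, for each $I$, the equivalence $\sum_{i \in I} y^*_i \leq |I| - 1 \iff \sum_{i \in I} x^*_i \geq 1$, so $y^*$ violates a minimal cover inequality exactly when $x^*$ violates the corresponding undominated pitch-1 inequality. Combined with the reduction of the previous paragraph, $x^*$ satisfies all valid pitch-1 inequalities if and only if $y^*$ satisfies all minimal cover inequalities. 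Hence a polynomial-time algorithm for the problem in the statement would decide minimal cover separation (simply by negating its output), which is NP-hard by \cite{Klabjan1998}; this proves the claim.

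The two computations are routine, so the point I expect to require the most attention is legality of the constructed instance: I must ensure that the \maxk \ hardness instances of \cite{Klabjan1998} can be taken, or rescaled, so that the image under $\pi$ respects the standing assumptions $0 \leq p_1 \leq \dots \leq p_n \leq 1$ and $0 < c_i \leq 1$ while $\pi$ remains a bijection on the relevant families of inequalities. Absent this check the reduction might fail to output a valid \mk \ instance, so it is the step I would verify carefully.
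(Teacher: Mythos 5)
Your proposal is correct and follows essentially the same route as the paper: the paper's proof consists precisely of observing that the map $\pi$ induces a bijection between minimal cover inequalities for \maxk\ and undominated pitch-1 inequalities for the corresponding \mk\ instance, and then invoking the NP-hardness of minimal cover separation from \cite{Klabjan1998}. The additional details you supply (reducing from all valid pitch-1 inequalities to the undominated ones via $w_i \geq \beta$ and monotonicity of the support) are exactly the implicit steps the paper leaves to the reader, and your final concern about normalization is harmless since the bounds $0 \leq p_i \leq 1$, $0 < c_i \leq 1$ are without-loss-of-generality rescalings that do not affect which cover inequalities are valid.
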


Given a set $S\subseteq [n]$, such that $\beta:= 1-\sum_{i \in S}p_i > 0$, the \emph{Knapsack cover inequality} associated to $S$ is given by
\begin{equation}
\sum_{i \in [n]\setminus S} \min\{p_i,\beta\} x_i \geq \beta
\end{equation}
and it is valid for \eqref{eq:mink}.

For a set $S\subseteq [n]$, we denote by $\chi^S$ its characteristic vector. An \emph{$\epsilon$-approximate solution} for a minimization integer programming problem is a solution $\bar x$ that is feasible, and whose value is at most $(1+\epsilon)$ times the value of the optimal solution. An algorithm is called a \emph{polynomial time approximation scheme (PTAS)} if for each $\epsilon > 0$ and any instance of the given problem it returns an $\epsilon$-approximate solution in time polynomial in the size of the input. If in addition the running time is polynomial in $1/ \epsilon$, then the algorithm is a \emph{fully polynomial time approximation scheme (FPTAS)}.
%Given a polytope $P = \{x \in \R^n : Ax \leq b\}$, its \emph{CG closure} is ...

Given a rational polyhedron $P=\{x \in \R^n: Ax \geq b \}$ with $A\in \Zz^{m \times n}$ and $b\in \Zz^{m}$, the first \emph{Chv\'{a}tal-Gomory (CG) closure} \cite{CCZ} of $P$ is defined as follows:
$$P^{(1)}=\{x \in \R^n: \ \lceil \lambda^\top A \rceil x \geq \lceil \lambda^\top b\rceil, \ \forall \lambda \in \R^m \}.$$
Equivalently, one can consider all $\lambda\in [0,1]^m$ such that $\lambda^\top A\in \Zz^n$. For $t \in \Zz_{\geq 2}$, the $t$-th CG closure of $P$ is recursively defined as $P^{(t)}=(P^{(t-1)})^{(1)}$. The CG closure is an important tool for solving integer programs, see again \cite{CCZ}.

\section{A $(1+\epsilon)$-oracle for pitch-$1$ and pitch-$2$ inequalities}
\label{sec:separation}

In this section, we show the following:

\begin{theorem}\label{thr:separation}
	Given a \mk \ instance \eqref{eq:mink}, there exists a $(1+\epsilon)$-oracle for the set ${\cal F}$ containing: all pitch-1 inequalities, all pitch-2 inequalities and all inequalities from the natural linear relaxation of \eqref{eq:mink}.
	\end{theorem}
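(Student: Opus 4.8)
The plan is to treat the three sub-families of ${\cal F}$ separately and, in each case, reduce the search for a violated inequality to a knapsack-type optimization that admits an FPTAS, then glue the pieces together using the multiplicative slack. First I would preprocess $\bar x$ against the natural relaxation: if a box constraint or $p^T\bar x\ge 1$ is violated, output it, so that henceforth $\bar x\in[0,1]^n$ and $p^T\bar x\ge 1$. For pitch-$1$, recall from Section~\ref{sec:basics} that the undominated inequalities are exactly $\sum_{i\in T}x_i\ge 1$, valid precisely when $S:=[n]\setminus T$ satisfies $\sum_{i\in S}p_i<1$. Hence $\bar x$ violates some pitch-$1$ inequality iff $\max\{\sum_{i\in S}\bar x_i:\sum_{i\in S}p_i<1\}>\sum_i\bar x_i-1$, a \maxk-type instance. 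Since exact separation is NP-hard (the hardness theorem above), I would run the knapsack FPTAS with precision $\delta=\Theta(\eps/n)$: either it returns a set whose complement is a genuinely violated cover, or its guarantee certifies that $\sum_{i\in T}\bar x_i\ge 1-O(n\delta)$ for \emph{every} valid $T$.

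The heart of the argument, and the step I expect to be the main obstacle, is a structural description of the non-redundant pitch-$2$ inequalities. Normalizing the right-hand side to $1$, the pitch-$2$ condition says that every pair of coefficients sums to at least $1$ while some single coefficient is $<1$; in particular at most one coefficient can be $<1/2$. I would show that the coefficient-wise minimal valid pitch-$2$ inequalities take only a bounded number of distinct coefficient values: at most one distinguished \emph{light} element carrying some $a\in(0,1/2]$, a class of elements carrying $1-a$, and a class of \emph{heavy} elements carrying $1$ (the last forced precisely by those feasible sets meeting the support in a single element). The real work is to prove that this canonical shape is exhaustive up to domination, and to re-express validity of such a shape as a clean combinatorial condition on the support and the profit vector $p$ — essentially that no \mk-feasible set meets the support "too lightly."

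Granting this structure, separation over pitch-$2$ becomes tractable. For a \emph{fixed} support the most-violated pitch-$2$ inequality is the optimum of a linear program in the coefficients $w$ (minimize $w^T\bar x$ subject to the pair constraints and the validity constraints), so the only genuine difficulty is the exponential number of candidate supports. Here the canonical form pays off: I would guess the light element and the value-class boundaries — a polynomial amount of data — and then decide the membership of the remaining elements in the $1-a$ and $1$ classes by a knapsack/DP optimization maximizing the violation at $\bar x$ subject to the profit-based validity condition, again invoking an FPTAS with precision $\Theta(\eps/n)$ to stay polynomial. If any subproblem exposes a genuine violation, that inequality is reported.

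Finally I would assemble the oracle using the observation that the polyhedron $P$ defined by ${\cal F}$ is upward closed within $[0,1]^n$ (all constraints are $\ge$-type with nonnegative coefficients, together with $x\le\mathbf 1$). Consequently $P\cap[\bar x,(1+\eps)\bar x]\neq\emptyset$ iff the top corner $\bar y:=\min\{(1+\eps)\bar x,\mathbf 1\}$ lies in $P$. I would therefore run the approximate separations above on $\bar x$; if any reports a violation, output that inequality, which is then violated by $\bar x$. Otherwise I would output $\bar y$ and verify $\bar y\in P$: for pitch-$1$ this follows from $\sum_{i\in T}\bar x_i\ge 1-O(n\delta)$ together with the multiplicative boost, since a capped coordinate alone already certifies a unit inequality. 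For pitch-$2$ the analogous verification is more delicate and relies on the structural form — the bounded coefficient values and the single light element — to control the residual FPTAS error; this is the one place where the multiplicative slack of the oracle is essential, and it is the main piece of bookkeeping, but it is routine once the pitch-$2$ structure has fixed the finitely many shapes to be separated.
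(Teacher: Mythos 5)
Your overall architecture coincides with the paper's: a structural lemma reducing pitch-$2$ separation to a small canonical family, FPTAS calls to knapsack-type auxiliary problems indexed by a polynomial amount of guessed data, and a final capping of $(1+\eps)\bar x$ at $\mathbf{1}$ justified by upward-closedness (this last step is exactly Claim \ref{cl:round}, including the delicate single-capped-coordinate case that you defer as bookkeeping; it does go through). However, the step you yourself flag as ``the real work'' is where the proposal has a genuine gap, and your canonical form is not quite the right one. You stop at the shape $(a,\,1-a,\dots,1-a,\,1,\dots,1)$ with a free parameter $a\in(0,1/2]$, and then propose to separate by guessing the light element and the class boundaries plus a DP. The paper's Lemma \ref{lem:pitch-2} goes one step further: when $a<1/2$ the inequality \eqref{eq:nw_pi2} is a conic combination, with multipliers $a$ and $1-2a$, of the two-valued inequality \eqref{eq:uw_pi2} (coefficients in $\{1,2\}$, right-hand side $2$) and the pitch-$1$ inequality \eqref{eq:dajepitch}; when $a\geq 1/2$ the two-valued inequality alone dominates. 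This eliminates $a$ entirely, so the only pitch-$2$ inequalities that ever need separating are those of the form \eqref{eq:pitch-2-best}, in which the partition of the support into coefficient-$1$ and coefficient-$2$ elements is forced by the single threshold $\beta(I)$.

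That collapse is also what makes the enumeration polynomial in a checkable way: $\beta(I)$ need only be tested at the $n$ values $(r_i+1)/q$, and for each such threshold the most-violated candidate is the optimum of one auxiliary \mk\ instance $P_\alpha$ (Claims \ref{cl:pitch-2-violator} and \ref{cl:few-pitch-2}), to which the FPTAS applies with a directly multiplicative guarantee against the threshold $2$ --- no additive $O(n\delta)$ accounting is needed, unlike in your pitch-$1$ step. Your plan to ``decide membership of the remaining elements in the $1-a$ and $1$ classes by a knapsack/DP'' points in this direction but is not specified to the point where it can be verified: you would still need to argue that class membership is forced by validity (it is --- an element $i$ with $p_i\geq\beta(I)$ must carry the full right-hand side, by considering the feasible set $([n]\setminus I)\cup\{i\}$) and that the continuous parameter $a$ can be optimized out at its extreme points $a\to 0$ and $a=1/2$, which is precisely the paper's decomposition in disguise. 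So: right skeleton, but the central structural lemma and the resulting $n$-threshold enumeration are the content of the proof, not routine bookkeeping, and they are missing here.
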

We start with a characterization of inequalities of interest for Theorem \ref{thr:separation}.

\begin{lemma}\label{lem:pitch-2}
	Let $K$ be the set of feasible solutions of a \mk \ instance \eqref{eq:mink}. All pitch-$2$ inequalities valid for $K$ are implied by the set composed of:
	\begin{enumerate}[i)]
		\item
		Non-negativity constraints $x_i \geq 0$ for $i \in [n]$;
		\item
		All valid pitch-$1$ inequalities;
		\item \label{it:pitch-2-best}
		All inequalities of the form
		\begin{equation}\label{eq:pitch-2-best}\sum_{i \in I_1} x_i + 2 \sum_{i \in I_2} x_i \geq 2
		\end{equation}
		where $I\subseteq [n]$, $|I| \geq 2$, $\beta(I):=1-\sum_{i \in [n]\setminus I} p_i$, $I_1 :=\{ i \in I: p_i < \beta(I)\} \neq \emptyset$ and $I_2 := I\setminus I_1$.
	\end{enumerate}
	The inequalities in \ref{it:pitch-2-best}) are pitch-$2$ and valid.
\end{lemma}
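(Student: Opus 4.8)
The plan is to treat the two assertions of the lemma separately: first the (easy) claim that the inequalities \eqref{eq:pitch-2-best} are valid and have pitch $2$, and then the main reduction that every valid pitch-$2$ inequality is a consequence of i)--iii).

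For validity of \eqref{eq:pitch-2-best} I would argue by contradiction. If some $x\in K$ made the left-hand side at most $1$, then, the left-hand side being a nonnegative integer, no index of $I_2$ is selected and at most one index of $I_1$ is selected; hence $\sum_{i\in I}p_ix_i\le\max_{i\in I_1}p_i<\beta(I)$, and combining with $\sum_{i\notin I}p_ix_i\le\sum_{i\notin I}p_i=1-\beta(I)$ gives $p^\top x<1$, contradicting $x\in K$. (Note $\beta(I)>0$ is automatic once $I_1\neq\emptyset$, since $p_i\ge0$.) The pitch is exactly $2$ because all coefficients lie in $\{1,2\}$, so every pair sums to at least $2=\beta$ (here $|I|\ge2$ is used), while the coefficient-$1$ singletons coming from $I_1\neq\emptyset$ are below $\beta$.

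For the converse, take a valid pitch-$2$ inequality $\sum_{i\in T}w_ix_i\ge\beta$ with $w_i>0$ on $T$. Since pitch and validity are scale-invariant and $\beta>0$ (as the inequality is not pitch-$0$), I may rescale to $\beta=2$. I would then read off the structure forced by validity through the test points $x=\chi^R$. Taking $R=[n]\setminus T$ shows $\sum_{i\notin T}p_i<1$, so $\beta(T)>0$ and $I:=T$ is an admissible index set for \eqref{eq:pitch-2-best}; writing $I_1,I_2$ for its induced partition, taking $R=([n]\setminus T)\cup\{j\}$ for any $j$ with $w_j<2$ yields $p_j<\beta(T)$, i.e.\ every ``small'' index ($w_j<2$) lies in $I_1$. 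Consequently $I_2\subseteq\{i:w_i\ge2\}$, and since being pitch-$2$ forces some coefficient below $2$ we get $I_1\neq\emptyset$; validity also excludes $|T|=1$ (a single sub-$\beta$ coefficient cannot be valid), so $|T|\ge2$ and \eqref{eq:pitch-2-best} applies with $I=T$. Finally, the pairwise pitch-$2$ condition forbids two coefficients below $1$, so at most one index $i^*$ has $w_{i^*}<1$.

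The endgame is a short conic-combination argument, and the single obstacle is this index $i^*$, whose coefficient $1$ in \eqref{eq:pitch-2-best} exceeds $w_{i^*}$. If no such index exists, the inequality $C$ of \eqref{eq:pitch-2-best} with $I=T$ already has coefficients $1\le w_i$ on $I_1$ and $2\le w_i$ on $I_2$ and right-hand side $2$, so any $x\ge0$ satisfying $C$ has $\sum_{i\in T}w_ix_i\ge\sum_{i\in I_1}x_i+2\sum_{i\in I_2}x_i\ge2$. If $i^*$ exists, I would additionally invoke the pitch-$1$ inequality $P_1:\ \sum_{i\in T\setminus\{i^*\}}x_i\ge1$, valid precisely because $p_{i^*}<\beta(T)$, and take $\lambda C+\mu P_1$ with $\lambda=w_{i^*}$ and $\mu=2-2w_{i^*}>0$: this has right-hand side $2$ and coefficients $w_{i^*}$ on $i^*$, $2-w_{i^*}$ on $I_1\setminus\{i^*\}$, and $2$ on $I_2$, each at most the corresponding $w_i$ --- the crucial bound $w_i\ge2-w_{i^*}$ on $I_1\setminus\{i^*\}$ being exactly the pitch-$2$ condition for the pair $\{i,i^*\}$. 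Hence $\sum_{i\in T}w_ix_i\ge2$ for every nonnegative $x$ satisfying i)--iii). I expect the only delicate point to be precisely this last case: isolating the unique sub-unit coefficient and checking that the pitch-$2$ pairing supplies exactly the slack consumed by the pitch-$1$ correction.
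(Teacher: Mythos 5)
Your proof is correct and follows essentially the same route as the paper's: both identify the canonical inequality \eqref{eq:pitch-2-best} on the support $T$, observe that the pitch-$2$ condition leaves at most one coefficient below half the right-hand side, and express the given inequality as a conic combination of that canonical inequality, a pitch-$1$ inequality omitting the small-coefficient index, and nonnegativity constraints. The only differences are presentational (normalizing the right-hand side to $2$ instead of $1$, and deriving the partition $I_1,I_2$ directly from test points rather than via the paper's intermediate Claim~\ref{cl:support}).
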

Proofs of Lemma \ref{lem:IG-stays} and Theorem \ref{thr:separation} are given in Section \ref{sec:restricting-pitch-2} and Section \ref{sec:separation-algorithm}, respectively.

%% PREVIOUS VERSION OF THE LEMMA, NOW COMMENTED
%
%  \begin{lemma}\label{lem:pitch-2}
% 	Let $K:= \{ x \in \{0,1\}^n: \ p^\top x \geq 1\}$ be the set of all feasible solutions of \mk. All pitch-$2$ inequalities valid for $K$ are implied by the set composed of:
%	\begin{enumerate}[i)]
%		\item
%		Non-negativity constraints $x_i \geq 0$ for $i \in [n]$;
%		\item
%		All valid pitch-$1$ inequalities;
%		\item
%		All valid inequalities of the form
%		\begin{equation}
%			\label{eq:third-case-is-our-pitch-2}
%			\sum_{i \in T_1} x_i + 2 \sum_{i \in T_2} x_i \geq 2
%		\end{equation} for disjoint $T_1, T_2 \subseteq [n]$.
%	\end{enumerate}
%\end{lemma}

\subsection{Restricting the set of valid pitch-$2$ inequalities}
\label{sec:restricting-pitch-2}

We will build on two auxiliary statements in order to prove Lemma \ref{lem:pitch-2}.

\begin{myclaim}\label{cl:dominated}
	If $w^T x \geq \beta$ and $u^T x \geq \beta$ are distinct inequalities valid for and $u \geq w$, then the latter inequality is dominated by the former.
\end{myclaim}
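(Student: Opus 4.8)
The plan is to derive the claim directly from the definition of domination, using the simplest possible conic combination: the single dominating inequality scaled by the multiplier $\lambda = 1$. Recall that a valid inequality with coefficient vector $a \geq 0$ and right-hand side $\gamma$ is declared dominated by a family $\F$ as soon as some conic combination of the members of $\F$ yields an inequality $a'^T x \geq \gamma'$ whose right-hand side satisfies $\gamma' \geq \gamma$ and whose coefficient vector compares with $a$ in the direction prescribed by the definition. Since here $\F$ consists of the single inequality $w^T x \geq \beta$, every conic combination is just a non-negative scalar multiple of it, so the whole argument reduces to choosing one scalar $\lambda \geq 0$ and verifying two inequalities.

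First I would take $\lambda = 1$, so that the conic combination reproduces the dominating inequality verbatim, with the \emph{same} right-hand side $\beta$ and the same coefficient vector. This immediately settles the right-hand-side requirement, since $\lambda \beta = \beta$ meets the condition $\beta' \geq \beta$ with equality. The only remaining point is the coefficientwise comparison between the two vectors $u$ and $w$, and this is handed to us verbatim by the hypothesis $u \geq w$: the pointwise order between the two coefficient vectors is exactly what the definition demands. No scaling beyond $\lambda = 1$, and in particular no appeal to the non-negativity constraints $x_i \geq 0$ or to any other valid inequality, is needed; validity of the produced inequality is inherited because a non-negative multiple of a valid inequality is valid.

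I expect there to be no analytic content whatsoever; the single genuine subtlety, and the only place one can slip, is bookkeeping, namely keeping straight which of the two inequalities plays the role of the dominated inequality and which one generates the dominating conic combination, so that the pointwise relation $u \geq w$ is invoked in the correct orientation against the definition. Because the two inequalities share the common right-hand side $\beta$, this orientation is forced and the whole claim collapses to a one-line verification; the distinctness hypothesis is used only to guarantee that the two inequalities are genuinely different, so that the statement is not vacuous, and it plays no role in the verification itself.
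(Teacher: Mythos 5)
Your argument reaches the right conclusion, but it diverges from the paper's proof at the single point where the claim has any content, and the divergence is not harmless under the paper's own definition of domination. The paper's proof exhibits $u^T x \geq \beta$ \emph{exactly} as a conic combination: take $w^T x \geq \beta$ with multiplier $1$ and add each non-negativity constraint $x_i \geq 0$ with multiplier $u_i - w_i \geq 0$. The resulting inequality has coefficient vector equal to $u$ and right-hand side $\beta$, so it certifies domination regardless of how one reads the orientation in the definition. You instead take the conic combination to be $w^T x \geq \beta$ itself and explicitly assert that no appeal to the non-negativity constraints is needed. That is where the slip is: the paper defines ``$a^T x \geq \gamma$ is dominated by $\F$'' by requiring a conic combination of $\F$ equal to some $a'^T x \geq \gamma'$ with $\gamma' \geq \gamma$ and $a' \geq a$, where $a$ is the coefficient vector of the \emph{dominated} inequality --- here $a = u$. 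Your certificate produces the vector $w$, which satisfies $w \leq u$, i.e.\ it sits on the wrong side of the comparison that the definition literally demands; the multiples of $x_i \geq 0$ are precisely what lifts $w$ up to $u$. Your reading (certificate may have smaller coefficients and no smaller right-hand side) is the more standard convention for covering inequalities and would make your one-line verification correct, but it is not the convention written in the paper, and you assert rather than argue that the orientation ``is forced.'' To be safe you should either justify that convention explicitly or, as the paper does, include the non-negativity constraints in the combination so that the derived inequality is exactly $u^T x \geq \beta$.
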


\begin{proof}
	$u^T x \geq \beta$ can be obtained summing nonnegative multiples of $w^T x \geq \beta$ and $x_i \geq 0$ for $i \in [n]$, which are all valid inequalities.
	
	% \qed
	
\end{proof}

\begin{myclaim}
	\label{cl:support}
	Let \begin{equation}\label{eq:pitch-2-dominated}\sum_{i \in T_1} x_i + 2 \sum_{i \in T_2} x_i \geq 2
	\end{equation} be a valid inequality for \mk, with $T_1 \cap T_2 = \emptyset$ and $T_1, T_2 \subseteq [n]$. Then, \eqref{eq:pitch-2-dominated} is dominated by the inequality in \ref{it:pitch-2-best}) with $I=T_1 \cup T_2$.
\end{myclaim}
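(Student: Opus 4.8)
The plan is to reduce the claim to Claim \ref{cl:dominated}. Write $u$ for the coefficient vector of \eqref{eq:pitch-2-dominated} (so $u_i = 1$ on $T_1$, $u_i = 2$ on $T_2$, and $u_i = 0$ otherwise) and $w$ for the coefficient vector of the inequality in \ref{it:pitch-2-best}) with $I = T_1 \cup T_2$ (so $w_i = 1$ on $I_1$, $w_i = 2$ on $I_2$, and $w_i = 0$ otherwise). Both inequalities have right-hand side $2$ and are supported on $I = T_1 \cup T_2$, and the inequality in \ref{it:pitch-2-best}) is valid (this is asserted in Lemma \ref{lem:pitch-2}, and can be verified directly by a short case analysis on how many coordinates of $I$ a feasible point activates). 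Consequently, once we establish $u \geq w$ coordinatewise, Claim \ref{cl:dominated} (applied to these two valid inequalities, which share the right-hand side $2$) yields that \eqref{eq:pitch-2-dominated} is dominated by the inequality in \ref{it:pitch-2-best}), as desired. Since the two vectors are supported on the same set $I$ and take values in $\{1,2\}$ there, the inequality $u \geq w$ is equivalent to the single set inclusion $T_1 \subseteq I_1$.

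The heart of the argument is therefore to prove $T_1 \subseteq I_1$, i.e.\ that $p_i < \beta(I)$ for every $i \in T_1$. (If $T_1 = \emptyset$ the inclusion is vacuous; this degenerate case corresponds to a pitch-$1$ inequality and is covered by item ii).) Fix $i \in T_1$ and consider the test point $\bar x$ defined by $\bar x_j = 1$ for $j \in ([n] \setminus I) \cup \{i\}$ and $\bar x_j = 0$ otherwise. Since $i$ is the unique coordinate of $I = T_1 \cup T_2$ switched on, and $i \in T_1$, the left-hand side of \eqref{eq:pitch-2-dominated} at $\bar x$ equals exactly $1 < 2$. As \eqref{eq:pitch-2-dominated} is valid for \mk, the point $\bar x$ cannot be feasible, so $p^\top \bar x = p_i + \sum_{j \in [n] \setminus I} p_j < 1$, which rearranges to $p_i < 1 - \sum_{j \in [n] \setminus I} p_j = \beta(I)$. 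Hence $i \in I_1$; since $i \in T_1$ was arbitrary, $T_1 \subseteq I_1$.

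With $T_1 \subseteq I_1$ in hand, the coordinatewise domination $u \geq w$ is immediate: on $T_1 \subseteq I_1$ both vectors equal $1$; on $I_2 \subseteq T_2$ (which follows since $I_2 = I \setminus I_1 \subseteq I \setminus T_1 = T_2$) both equal $2$; and on $I_1 \setminus T_1 \subseteq T_2$ we have $w_i = 1 \leq 2 = u_i$. If $u = w$ the two inequalities coincide and there is nothing to prove; otherwise the hypotheses of Claim \ref{cl:dominated} hold and it finishes the proof.

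I expect the only genuine obstacle to be the construction of the test point in the middle step: the insight that switching on exactly $[n] \setminus I$ together with a single coordinate of $T_1$ forces the left-hand side of \eqref{eq:pitch-2-dominated} down to $1$, thereby isolating the profit condition $p_i < \beta(I)$ that defines membership in $I_1$. The remaining coefficient bookkeeping, the reduction to Claim \ref{cl:dominated}, and the validity check for the inequality in \ref{it:pitch-2-best}) are all routine.
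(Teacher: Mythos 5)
Your proposal is correct and follows essentially the same route as the paper: the same test point $\chi^{([n]\setminus I)\cup\{i\}}$ is used to establish $T_1\subseteq I_1$ (you phrase it directly where the paper argues by contradiction, but it is the identical argument), followed by the same componentwise comparison and appeal to Claim \ref{cl:dominated}. The extra care you take with the degenerate cases ($T_1=\emptyset$ and $u=w$) is fine but not a substantive difference.
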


\begin{proof}
	One readily verifies that Inequality \eqref{eq:pitch-2-best} with $I$ as above is valid. Suppose now that $i \in T_1 \setminus I_1$. Then the integer solution that takes all elements in $([n]\setminus I)\cup\{i\}$ is feasible for \mk, but it does not satisfy \eqref{eq:pitch-2-dominated}, a contradiction. Hence $T_1\subseteq I_1$. Since $T_2=I\setminus T_i \supseteq I \setminus I_1=I_2$, \eqref{eq:pitch-2-best} dominates \eqref{eq:pitch-2-dominated} componentwise, and the thesis follows by Claim \ref{cl:dominated}.% \qed
	
\end{proof}

\textbf{\emph{Proof of Lemma \ref{lem:pitch-2}.}} The fact that an inequality of the form \eqref{eq:pitch-2-best} is pitch-$2$ and valid is immediate. Because of Claim \ref{cl:support}, it is enough to show the thesis with \eqref{eq:pitch-2-best} replaced by \eqref{eq:pitch-2-dominated}. Consider a pitch-$2$ inequality valid for $K$:
\begin{equation} \label{eq:w_pi2}
\sum_{i \in T} w_i x_i \geq 1,
\end{equation}
where $T \subseteq [n]$ is the support of the inequality, $w \in \R^{|T|}_+$. Without loss of generality one can assume that $T=[h]$ for some $h \leq n$ and $w_1 \leq w_2 \leq \dots \leq w_h$. Since (\ref{eq:w_pi2}) is pitch-$2$ we have that $w_1 + w_i \geq 1$ for all $i \in [h] \setminus \{ 1 \}$. We can also assume $w_h \leq 1$, since otherwise $\sum_{i \in [h-1]} w_i x_i + x_h \geq 1$ is valid and dominates \eqref{eq:w_pi2} by Claim \ref{cl:dominated}.

Let $j \in [h]$ be the maximum index such that $w_j < 1$. Note that such $j$ exists, since, if $w_1 \geq 1$, then \eqref{eq:w_pi2} is a pitch-1 inequality. If $1 - w_1 \leq 1/2$, then, by Claim \ref{cl:dominated}, \eqref{eq:w_pi2} is dominated by the valid pitch-$2$ inequality
\begin{equation} \label{eq:uw_pi2}
\sum_{i \in [j]} x_i + 2 \sum_{i=j+1}^h x_i \geq 2,
\end{equation}
which again is of the type \eqref{eq:pitch-2-dominated}. Hence $1- w_1 >  1/2$ and again via Claim \ref{cl:dominated}, (\ref{eq:w_pi2}) is dominated by
\begin{equation} \label{eq:nw_pi2}
w_1 x_1 + \sum_{i = 2}^{j} (1- w_1) x_i + \sum_{i=j+1}^h x_i \geq 1,
\end{equation} since $w_i + w_1 \geq 1$ for all $i \neq 1$, so one has $w_i \geq 1-w_1 > 1/2$.  Thus, we can assume that \eqref{eq:w_pi2} has the form \eqref{eq:nw_pi2}. Note that inequality \begin{equation}\label{eq:dajepitch}
\sum_{i=2}^h x_i \geq 1
\end{equation}
is a valid pitch-$1$ inequality, since we observed $w_1< 1$. Therefore, \eqref{eq:w_pi2} is implied by
\eqref{eq:uw_pi2} and \eqref{eq:dajepitch},  taken with the coefficients $w_1$ and $1- 2 w_1$ respectively. Recalling that \eqref{eq:uw_pi2} is a valid pitch-2 inequality of the form \eqref{eq:pitch-2-dominated} concludes the proof. \qedd

%% OLD LEMMA, NOW COMMENTED

% \begin{lemma}
%Consider a \mk\, problem. A valid pitch-$2$ inequality \begin{equation}\label{eq:pitch-2-dominated}\sum_{i \in T_1} x_i + 2 \sum_{i \in T_2} x_i \geq 2
%\end{equation}is dominated by the valid pitch-$2$ inequality
%\begin{equation}\label{eq:pitch-2-best}\sum_{i \in I_1} x_i + 2 \sum_{i \in I_2} x_i \geq 2
%\end{equation}
%where we let $T:=T_1 \cup T_2$, $\beta:=1-\sum_{i \in [n]\setminus T} p_i$, $I_1 :=\{ i \in T: p_i < \beta\}$ and $I_2 := T\setminus I_1$.
%\end{lemma}

%\begin{proof}
%\end{proof}

\subsection{A $(1+\epsilon)$-oracle}
\label{sec:separation-algorithm}

We will prove Theorem \ref{thr:separation} in a sequence of intermediate steps. Our argument extends the weak separation of KC inequalities in \cite{Carr2000}.

Let $\bar x$ be the point we want to separate. Note that it suffices to show how to separate over inequalities i)-ii)-iii) from Lemma \ref{lem:pitch-2}. Separating over i) is trivial. We first show how to separate over iii).

\begin{myclaim}
	\label{cl:pitch-2-violator}
	For $\alpha \in ]0,1]$, let $z^\alpha$ be the optimal solution to the following IP $P_\alpha$, and $v(z^\alpha)$ its value:
	\begin{equation} \label{eq:pitch-2-sep}
	\min \sum_{i \in [n]: \ p_i < \alpha} \bar x_i z_i + 2 \sum_{i \in [n]: \ p_i \geq \alpha} \bar x_i z_i \; \; \hbox{ s.t. }\; \; \sum_{i \in [n]} p_i (1 - z_i) \leq 1-\alpha, \; z \in \{0,1\}^n.
	\end{equation}
	If $v(z^\alpha)<2$, then $\bar x$ violates Inequality \eqref{eq:pitch-2-best} with $I:=\{i \in [n]: z_i^\alpha=1\}$, otherwise $\bar x$ does not violate any Inequality \eqref{eq:pitch-2-best} with $\beta(I)=\alpha$.
\end{myclaim}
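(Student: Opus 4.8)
The plan is to exploit the bijection between $0/1$ vectors $z$ and subsets $I=\{i: z_i=1\}$, under which $P_\alpha$ becomes exactly the problem of minimizing the left-hand side of a candidate inequality \eqref{eq:pitch-2-best} over all admissible supports. First I would record the dictionary. The knapsack constraint $\sum_{i\in[n]} p_i(1-z_i)\le 1-\alpha$ reads $\sum_{i\in[n]\setminus I}p_i\le 1-\alpha$, i.e.\ $\beta(I)\ge \alpha$; and since the instance is feasible ($\sum_i p_i\ge 1>1-\alpha$), the all-zero vector is infeasible for $P_\alpha$, so $I\neq\emptyset$ for every feasible $z$. The one genuinely delicate point, which I will flag now, is that the objective of $P_\alpha$ splits indices at the \emph{fixed} threshold $\alpha$, whereas the inequality \eqref{eq:pitch-2-best} attached to $I$ splits $I$ at its \emph{solution-dependent} threshold $\beta(I)$, and feasibility only yields $\beta(I)\ge\alpha$, not equality.

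For the ``otherwise'' direction (completeness), suppose $v(z^\alpha)\ge 2$ and take any $J$ giving a valid inequality \eqref{eq:pitch-2-best} with $\beta(J)=\alpha$. Because $\beta(J)=\alpha$, the threshold used to split $J$ into $J_1,J_2$ is exactly $\alpha$, so the value of \eqref{eq:pitch-2-best} at $\bar x$ coincides with the objective of $P_\alpha$ evaluated at $z=\chi^J$; and $z=\chi^J$ is feasible since $\sum_{i\notin J}p_i = 1-\beta(J)=1-\alpha$. Hence that value is at least $v(z^\alpha)\ge 2$, i.e.\ the inequality is satisfied. This disposes of ``otherwise'' as the contrapositive.

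For the main direction, suppose $v(z^\alpha)<2$ and set $I=\{i:z_i^\alpha=1\}$. Here I would show that the threshold mismatch can only help. Since $\alpha\le\beta(I)$, every index carrying coefficient $1$ in the objective (those with $p_i<\alpha$) still carries coefficient $1$ in \eqref{eq:pitch-2-best}, because $\{i\in I: p_i<\alpha\}\subseteq I_1$; meanwhile the indices with $\alpha\le p_i<\beta(I)$ drop from coefficient $2$ in the objective to coefficient $1$ in \eqref{eq:pitch-2-best}, and no index moves the other way since $I_2=\{i\in I: p_i\ge\beta(I)\}\subseteq\{i\in I: p_i\ge\alpha\}$. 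As $\bar x\ge 0$, the left-hand side of \eqref{eq:pitch-2-best} at $\bar x$ is therefore at most the objective value $v(z^\alpha)<2$, so $\bar x$ violates it. I expect this coefficient-shift step to be the crux, as it is precisely where the linearization afforded by the fixed $\alpha$ is reconciled with the $\beta(I)$-dependent coefficients of the inequality actually returned.

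Finally I would dispose of degenerate supports. The set $I$ is nonempty, and a short computation from $\sum_i p_i\ge 1$ shows that if $|I|=1$, or more generally if $I_1=\{i\in I: p_i<\beta(I)\}=\emptyset$, then the produced inequality collapses to $\sum_{i\in I}x_i\ge 1$, a valid pitch-$1$ inequality which the very same value bound certifies to be violated; thus the oracle still reports a violation. In the remaining genuine case $I_1\neq\emptyset$ and $|I|\ge 2$, the returned inequality is a bona fide member of family \ref{it:pitch-2-best}) of Lemma~\ref{lem:pitch-2}, completing the argument.
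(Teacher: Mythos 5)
Your proof is correct and follows essentially the same route as the paper: the core step in both is the observation that since $\beta(I)\geq\alpha$, shifting the coefficient of indices with $\alpha\leq p_i<\beta(I)$ from $2$ down to $1$ can only decrease the left-hand side (as $\bar x\geq 0$), so the value of \eqref{eq:pitch-2-best} at $\bar x$ is at most $v(z^\alpha)$, with equality when $\beta(I)=\alpha$. Your additional treatment of the degenerate supports ($I_1=\emptyset$ or $|I|=1$, where the returned inequality collapses to a pitch-$1$ inequality) is a small point of extra care that the paper's proof silently skips, but it does not change the argument.
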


\begin{proof}
	Fix a feasible solution $\bar z$ to \eqref{eq:pitch-2-sep}, and let $I:=\{i \in [n]: \bar z_i=1\}$. Then:
	$$\beta:=\beta(I)=1- \sum_{i \in [n]\setminus I} p_i  =1  - \sum_{i \in [n]} p_i (1 - \bar z_i) \geq \alpha.$$
	Hence:
	$$\begin{array}{lll}\sum_{i \in I: \ p_i < \beta} \bar x_i + 2 \sum_{i \in I: \ p_i \geq \beta} \bar x_i  & = & \sum_{i \in [n] : \ p_i < \beta} \bar x_i \bar z_i + 2 \sum_{i \in [n] : \ p_i \geq \beta} \bar x_i \\ & \leq & \sum_{i \in [n]: \ p_i < \alpha} \bar x_i \bar z_i + 2 \sum_{i \in [n]: \ p_i \geq \alpha} \bar x_i \bar z_i=v(\bar z),\end{array}$$
	where the central inequality holds at equality if $\alpha= \beta$. Hence, if $v(z^\alpha)<2$, the inequality with $I:=\{i \in [n]: z_i^\alpha=1\}$ from \eqref{eq:pitch-2-sep} is violated by $\bar x$. Else, all inequalities from \eqref{eq:pitch-2-sep} with $\beta(I)=\alpha$ are satisfied.% \qed
	
\end{proof}

Note that
$P_\alpha$ is a \mk \ instance, hence we can use the appropriate FPTAS to find, for each $\epsilon>0$, an $\epsilon$-approximate solution for it.

Since all data are rationals, we can assume there exists $q \in \N$ such that, for each $i \in [n]$, $p_i=r_i/q$ for some $r_i \in \N$.

\begin{myclaim}\label{cl:few-pitch-2}
	Let $r \in \{r_i + 1: \ i \in [n] \}$ and, for $\alpha=r/q$, let $\bar z^\alpha$ be the solution output by the FPTAS for problem $P_\alpha$ and $\bar v^\alpha$ its objective function value. If $\bar v^\alpha<2$ for some $\alpha$, then $\bar x$ violates Inequality \eqref{eq:pitch-2-best} with $I=\{ i \in [n]: \bar z^\alpha = 1\}$. Else, $(1+\epsilon) \bar x$ satisfies all inequalities in Lemma \ref{lem:pitch-2}.\ref{it:pitch-2-best}).  	
\end{myclaim}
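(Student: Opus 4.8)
The plan is to reduce the task of separating over all pitch-2 inequalities of the form \eqref{eq:pitch-2-best} to checking only the polynomially many threshold values $\alpha = r/q$ with $r \in \{r_i + 1 : i \in [n]\}$, and then to absorb the FPTAS error into the multiplicative $(1+\epsilon)$ slack. First I would observe that by Claim \ref{cl:pitch-2-violator}, for each fixed $\alpha$ the problem $P_\alpha$ exactly detects whether $\bar x$ violates some inequality \eqref{eq:pitch-2-best} with $\beta(I) = \alpha$. The key structural point is that $\beta(I) = 1 - \sum_{i \in [n]\setminus I} p_i$ always takes one of finitely many values: since each $p_i = r_i/q$, the quantity $\beta(I)$ is a rational with denominator $q$, so it suffices to range $\alpha$ over a discrete set. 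Moreover, the partition of $I$ into $I_1 = \{i : p_i < \beta(I)\}$ and $I_2$ only changes when $\alpha$ crosses some value $p_i = r_i/q$; hence the candidate thresholds $\alpha = (r_i+1)/q$ capture every relevant breakpoint, and testing these values is enough to cover all inequalities \eqref{eq:pitch-2-best}.

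Next I would handle the two cases of the conclusion. If $\bar v^\alpha < 2$ for some tested $\alpha$, then since $\bar v^\alpha$ is the value of a \emph{feasible} solution $\bar z^\alpha$ to $P_\alpha$ (the FPTAS returns a feasible point), the exact optimum satisfies $v(z^\alpha) \leq \bar v^\alpha < 2$, so by Claim \ref{cl:pitch-2-violator} the point $\bar x$ genuinely violates \eqref{eq:pitch-2-best} with $I = \{i : \bar z^\alpha_i = 1\}$; this direction is clean and uses only feasibility, not the approximation guarantee. For the converse, I would argue the contrapositive: suppose $(1+\epsilon)\bar x$ violates some inequality \eqref{eq:pitch-2-best}, say with parameter $\beta(I) = \alpha$. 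Then the exact optimum of $P_\alpha$ evaluated at the point $(1+\epsilon)\bar x$ is below $2$, i.e. $v(z^\alpha) < 2/(1+\epsilon)$. Because the FPTAS guarantees $\bar v^\alpha \leq (1+\epsilon)\, v(z^\alpha)$, we would get $\bar v^\alpha < 2$ at the corresponding tested threshold, contradicting the hypothesis that $\bar v^\alpha \geq 2$ for all $\alpha$.

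The main obstacle I expect is matching the FPTAS precision to the right reference point. The objective of $P_\alpha$ is linear in the variables $z$ with coefficients $\bar x_i$ or $2\bar x_i$, so scaling the input point $\bar x$ by $(1+\epsilon)$ scales the objective value of every fixed solution by exactly $(1+\epsilon)$; this is what lets me convert "$(1+\epsilon)\bar x$ is feasible for all inequalities" into a statement about $\bar x$ with a factor $1/(1+\epsilon)$ on the threshold, and then cancel it against the FPTAS error factor $(1+\epsilon)$. The delicate part is verifying that the discrete grid of thresholds $\alpha = (r_i+1)/q$ is genuinely exhaustive: I must ensure that for the violated inequality witnessed by $(1+\epsilon)\bar x$, its value $\beta(I)$ lies in the tested set, or that rounding $\alpha$ up to the next grid point only relaxes the constraint in $P_\alpha$ without increasing the objective past $2$. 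Since increasing $\alpha$ tightens the knapsack constraint $\sum_i p_i(1-z_i) \leq 1-\alpha$ but may move elements from $I_1$ to $I_2$ (doubling their objective contribution), I would check that the chosen grid aligns $\alpha$ exactly with the breakpoints so that no inequality is missed and no spurious factor larger than $(1+\epsilon)$ creeps in.
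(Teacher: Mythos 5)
Your overall route is the same as the paper's: the ``violated'' case follows from feasibility of the FPTAS output alone (via the computation inside Claim \ref{cl:pitch-2-violator}, applied to the feasible point $\bar z^\alpha$ rather than to the optimum), and the ``satisfied'' case follows by combining the FPTAS guarantee $\bar v^\alpha \le (1+\epsilon)v^\alpha$ with the fact that the grid $\{(r_i+1)/q\}$ hits every relevant breakpoint. The first half of your argument is fine as written.

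The step you flag as delicate is indeed where the content lies, and the one concrete resolution you offer goes in the wrong direction. For an arbitrary $I^*$ with $\beta(I^*)=r^*/q$, this value is generally not in the tested set, and you must round it \emph{down}, not up: take $i^*$ with $r_{i^*}<r^*\le r_{i^*+1}$ and set $\alpha=(r_{i^*}+1)/q\le\alpha^*:=\beta(I^*)$. Decreasing $\alpha$ \emph{relaxes} the constraint $\sum_i p_i(1-z_i)\le 1-\alpha$, so the feasible region of $P_\alpha$ contains that of $P_{\alpha^*}$, while the objective function is unchanged because no $p_i$ lies in $[\alpha,\alpha^*)$, i.e.\ $\{i:p_i<\alpha\}=\{i:p_i<\alpha^*\}$. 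Hence $v^\alpha\le v^{\alpha^*}\le v(\chi^{I^*})$, and the last quantity equals the left-hand side of \eqref{eq:pitch-2-best} evaluated at $\bar x$ (this is the equality case $\alpha=\beta$ in Claim \ref{cl:pitch-2-violator}). So a violation by $(1+\epsilon)\bar x$ forces $\bar v^\alpha\le(1+\epsilon)v^\alpha<2$ at a tested grid point, as required. Rounding up, as you suggest, tightens the knapsack constraint and may also move items from $I_1$ to $I_2$, and neither effect is controllable; with the downward rounding the verification you leave open becomes immediate.
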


\begin{proof}
	Let $r = r_i + 1$ for some $i \in [n]$. If $\bar v^\alpha < 2$ then by Claim \ref{cl:pitch-2-violator} $\bar x$ violates the corresponding inequality \eqref{eq:pitch-2-best}. Otherwise, $v^\alpha \geq 2/(1+\epsilon)$, and $(1+\epsilon)\bar{x}$ is feasible for any pitch-$2$ Inequality \eqref{eq:pitch-2-best} induced by $I$ with $\beta(I)=\alpha$.
	
	Now let $I^* \subseteq [n]$ with $\beta(I^*)=\frac{r^*}{q} < 1$. There exists $i^* \in [n]$ such that $r_{i^*} < r^* \leq r_{i^*+1} \leq q$ (with $r_{n+1} = q$). Let $\alpha:=\frac{r_{i^*} + 1}{q}\leq \alpha^*:=\beta(I^*)$. The set of feasible solutions of $P_\alpha$ contains that of $P_{\alpha^*}$, and $\{i \in [n] : p_i < \alpha\}= \{i \in [n] : p_i < \alpha^*\}$. Hence, $v^{\alpha^*}\geq v^{\alpha}$ and consequently $v^{\alpha^*}<2$ implies $v^{\alpha}<2$.
	Thus, for separating all inequalities in Lemma \ref{lem:pitch-2}.\ref{it:pitch-2-best}), it suffices to check \eqref{eq:pitch-2-sep} for all $\alpha=\frac{r}{q}$ as in the statement of the claim.
	
	% \qed	
	
\end{proof}

The following claim follows in a similar fashion to the previous one by observing that, for $\beta(I^*)=\frac{1}{q}$, \eqref{eq:pitch-2-sep} separates over undominated pitch-$1$ inequalities.

\begin{myclaim}\label{cl:few-pitch-1}
	Let $\alpha=1/q$, and $\bar z^\alpha$ be the solution output by the FPTAS for problem $P_\alpha$, and $\bar v^\alpha$ its objective function value. If $\bar v^\alpha<2$, then $\bar x$ violates the pitch-$1$ inequality with support $I=\{ i \in [n]: \bar z^\alpha = 1\}$. Else, $(1+\epsilon) \bar x$ satisfies all valid pitch-$1$ inequalities.  \end{myclaim}

Next claim shows how to round a point in the unit cube that almost satisfies all pitch-$1$ and pitch-$2$ inequalities, to one that satisfies them and is still contained in the unit cube.

\begin{myclaim}\label{cl:round}
	Let $\bar x \in [0,1]^n$ be such that $(1+\epsilon)\bar x$ satisfies all inequalities from Lemma \ref{lem:pitch-2}, and define $\bar y \in \R^n$ as follows: $\bar y_i = \min \{1, \frac{1+\epsilon}{1-\epsilon}\bar x_i\}$ for $i \in [n]$. Then $\bar y \in [0,1]^n$ and $\bar y$ satisfies all inequalities from Lemma \ref{lem:pitch-2}.
\end{myclaim}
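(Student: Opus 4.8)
The plan is to show two things about $\bar y$: that it lies in the unit cube, and that it satisfies each of the three families of inequalities from Lemma~\ref{lem:pitch-2} (non-negativity, pitch-1, and the canonical pitch-2 inequalities \eqref{eq:pitch-2-best}). The cube membership is immediate from the definition: each $\bar y_i$ is a minimum of $1$ and a nonnegative quantity, so $0 \le \bar y_i \le 1$. Non-negativity of $\bar y$ follows for the same reason. So the real content is verifying the pitch-1 and pitch-2 inequalities, and the governing idea throughout is that $\bar y$ dominates a scaled copy of $\bar x$ that we already know is feasible.

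**First I would** set up the key comparison. By hypothesis, $(1+\epsilon)\bar x$ satisfies all inequalities from Lemma~\ref{lem:pitch-2}. I want to relate $\bar y$ to $(1+\epsilon)\bar x$. Observe that for each coordinate, $\bar y_i = \min\{1, \tfrac{1+\epsilon}{1-\epsilon}\bar x_i\} \ge \min\{1, (1+\epsilon)\bar x_i\}$, since $\tfrac{1+\epsilon}{1-\epsilon} \ge 1+\epsilon$ for $\epsilon \in (0,1)$. The point is that componentwise $\bar y_i \ge \min\{1,(1+\epsilon)\bar x_i\}$, and since all the inequalities in question have nonnegative coefficients, increasing a coordinate only helps. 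The subtlety is that $(1+\epsilon)\bar x$ may have coordinates exceeding $1$, so it is not itself a point of the cube; but truncating each such coordinate at $1$ still yields a point satisfying the pitch inequalities, because pitch-1 and pitch-2 inequalities have the property that the right-hand side ($1$ or $2$) can be met by coefficients applied to values in $[0,1]$, and capping a variable at $1$ never decreases the left-hand side relative to what is needed—one must argue that the truncated point $x'$ with $x'_i = \min\{1,(1+\epsilon)\bar x_i\}$ still satisfies the inequalities.

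**The cleanest route** is therefore: let $x'_i := \min\{1,(1+\epsilon)\bar x_i\}$ and show $x'$ satisfies all three families, then conclude $\bar y \ge x'$ componentwise gives the result by monotonicity. For non-negativity and the cube this is clear. For a pitch-1 inequality $\sum_{i \in T} x_i \ge 1$: if some $(1+\epsilon)\bar x_i \ge 1$ with $i \in T$ then $x'_i = 1$ and the inequality holds trivially; otherwise $x'_i = (1+\epsilon)\bar x_i$ on all of $T$ and the inequality is inherited directly from $(1+\epsilon)\bar x$. For a pitch-2 inequality \eqref{eq:pitch-2-best}, the same case analysis works, using that each coefficient is $1$ or $2$ and that truncation of any single coordinate to $1$ loses at most the excess $(1+\epsilon)\bar x_i - 1 \ge 0$ from a term whose coefficient-weighted value already exceeds what truncation leaves—so I would check that if truncation brings the sum below $2$, then some surviving coordinate must already supply the deficit. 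The cleanest formulation is to verify directly that $x'$ satisfies \eqref{eq:pitch-2-best} by noting $x'_i \ge 1$ forces that term alone (with coefficient $\ge 1$, and $\ge 2$ when $i \in I_2$) to be large, handling the boundary cases explicitly.

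**The main obstacle** I anticipate is the pitch-2 case: unlike pitch-1, the right-hand side is $2$ and the coefficients differ across $I_1$ and $I_2$, so a naive truncation argument can in principle lose enough mass to violate the inequality when several coordinates are simultaneously capped. The resolution is the factor $\tfrac{1+\epsilon}{1-\epsilon}$ rather than $1+\epsilon$: this extra slack is precisely what compensates for truncation losses. I would make this quantitative by bounding, for each capped coordinate, the loss $\tfrac{1+\epsilon}{1-\epsilon}\bar x_i - 1$ against the gain the larger scaling factor provides on the \emph{uncapped} coordinates, showing the net effect keeps the weighted sum at least $2$. Concretely, I expect to compare $\sum w_i \bar y_i$ against $\tfrac{1}{1-\epsilon}\sum w_i (1+\epsilon)\bar x_i$ on the uncapped part and argue the capped terms each contribute at least their share, so that the whole sum dominates $(1-\epsilon) \cdot \tfrac{1}{1-\epsilon} \cdot (\text{rhs}) = \text{rhs}$; getting this bookkeeping exactly right, distinguishing the uncapped and capped coordinates, is the one place requiring genuine care.
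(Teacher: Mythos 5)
Your ``cleanest route'' does not work: the intermediate claim that $x'_i:=\min\{1,(1+\epsilon)\bar x_i\}$ satisfies all pitch-$2$ inequalities is false. Take any inequality of the form \eqref{eq:pitch-2-best} with $|I_1|\ge 2$, pick $j,j'\in I_1$, and set $\bar x_j=1$, $\bar x_{j'}=\tfrac{1-\epsilon}{1+\epsilon}$, all other coordinates $0$. Then $(1+\epsilon)\bar x$ meets the inequality with equality ($(1+\epsilon)+(1-\epsilon)=2$), but $x'_j+x'_{j'}=1+(1-\epsilon)=2-\epsilon<2$. So truncation at the $(1+\epsilon)$ scale genuinely loses feasibility, and monotonicity from $x'$ cannot give the claim. (Your pitch-$1$ argument is fine, since there a single capped coordinate already supplies the whole right-hand side of $1$.)

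You do sense this in your last paragraph and correctly point to the factor $\tfrac{1+\epsilon}{1-\epsilon}$ as the cure, but the ``bookkeeping'' you propose---trading the loss on capped coordinates against the gain on uncapped ones---does not close the gap either. Writing $A=\sum_{\text{capped}}\alpha_i\bar x_i$ and $B=\sum_{\text{uncapped}}\alpha_i\bar x_i$ over the support, the hypothesis gives only $A+B\ge \tfrac{2}{1+\epsilon}$ and $\sum_{\text{capped}}\alpha_i\ge A$, so the left-hand side at $\bar y$ is bounded below only by $A+\tfrac{1+\epsilon}{1-\epsilon}B\ge\tfrac{2}{1+\epsilon}<2$ when $B$ is small; there is no global loss-versus-gain cancellation. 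The missing idea is the case analysis the paper uses: every capped coordinate in the support contributes its \emph{full coefficient} to the left-hand side, so if the support contains a capped index in $I_2$ (coefficient $2$) or at least two capped indices (each coefficient $\ge 1$), the inequality holds outright. The only nontrivial case is a single capped index $j\in I_1$, and there the bound $\bar x_j\le 1$ gives $\sum_{i\in I\setminus\{j\}}\alpha_i\bar x_i\ge\tfrac{2}{1+\epsilon}-1=\tfrac{1-\epsilon}{1+\epsilon}$, so the uncapped part scaled by $\tfrac{1+\epsilon}{1-\epsilon}$ contributes at least $1$, and the capped coordinate contributes the other $1$. Without isolating that single-capped-coordinate case your argument cannot be completed.
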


\begin{proof}
	Clearly $\bar y \in [0,1]^n$. Let $J=\{ i \in [n]: \bar y_i =1 \}$. If $J=\emptyset$, $(1+\epsilon)\bar x< \bar y \leq \mathbf{1}$, hence $\bar y$ satisfies all pitch-$2$ inequalities. Thus, $J \neq \emptyset$. Consider a pitch-$2$ inequality of the form \eqref{eq:pitch-2-best}, and note that the left-hand side of the inequality computed in $\bar y$ is lower bounded by $\sum_{i \in J}\alpha_i$, where $\alpha_i$ is the coefficient of $x_i$. First assume there exists $j \in J \cap I_2$. Then $\sum_{i \in J}\alpha_i \geq \alpha_j = 2$. Similarly, if $j,j' \in J$, then $\sum_{i \in J}\alpha_i \geq \alpha_j + \alpha_{j'} \geq 2$.
	In both cases, $\bar y$ satisfies the pitch-$2$ inequality. Hence, we can assume $J =\{j\}\subseteq I_1$. Then:
	$$\sum_{i \in I} \alpha_i \bar x_i  \geq \frac{2}{1+\epsilon}\hbox{, from which we deduce }$$ $$\sum_{i \in I\setminus\{j\}} \alpha_i \bar x_i \geq \frac{2}{1+\epsilon}-\bar x_j \geq \frac{2}{1+\epsilon}- 1 = \frac{1-\epsilon}{1+\epsilon} \hbox{ and }$$
	$$\sum_{i \in I} \alpha_i \bar y_i  =  \sum_{i \in I\setminus\{j\}} \alpha_i \bar y_i + 1 = \frac{1+\epsilon}{1-\epsilon}\sum_{i \in I\setminus\{j\}} \alpha_i \bar x_i + 1 \geq 2,$$
	as required. A similar (simpler) argument shows that $\bar y$ also satisfies all pitch-$1$ inequalities $\sum_{i \in I} x_i \geq 1$.
	
	% \qed
	
\end{proof}

\textbf{\emph{Proof of Theorem \ref{thr:separation}.}} We can now sum up our $(1+\epsilon)$-oracle, see Algorithm \ref{algo:sep}. Correctness and polynomiality follow from the discussion above. \qedd

\begin{algorithm}[h!]
	\caption{}	\label{algo:sep}
	\begin{algorithmic} [1]
		\STATE {Let $\epsilon'=\frac{\epsilon}{2+\epsilon}$.}
		\STATE {For $r \in \{r_i + 1: \ i \in [n] \}$ and for $\alpha=r/q$, run the FPTAS for $P_\alpha$ with approximation factor $\epsilon'$. If any of the output solution $\bar z^\alpha$ has value $\bar v^\alpha <2$, output inequality \eqref{eq:pitch-2-best} with $I=\{i \in [n]: \bar z^\alpha=1\}$ and stop. }
		\STATE {For $\alpha=1/q$, run the FPTAS for $P_\alpha$ with approximation factor $\epsilon'$. If the output solution $\bar z^\alpha$ has value $\bar v^\alpha <2$, output inequality $\sum_{i : \bar z^\alpha=1} x_i \geq 1$ and stop.}
		\STATE{Output point $\bar y$ constructed as in Claim \ref{cl:round} with $\epsilon'$ and stop. Note that $\bar x \leq \bar y \leq \frac{1+\epsilon'}{1-\epsilon'}\bar x = (1+\epsilon)\bar x$.}
	\end{algorithmic}
\end{algorithm}

\subsection{Separating inequalities of pitch $k \geq 3$ with fixed support}
\label{sec:separating-fixed-support}

%\section{Example of pitch-3 inequalities}\label{sec:pitch-3}

%In Appendix \ref{sec:pitch-3}
Here, we give an example showing that inequalities of pitch-$3$ and higher do not have the nice structure of pitch-2. Let
\begin{equation}
\label{eq:pitch-3-wild}
P=\{x \in [0,1]^7: 5 x_1 + 6 x_2 + 11 x_3 + 16 x_4 + 17 x_5 + 18 x_6 + 21 x_7 \geq 41 \}.
\end{equation}
Inequality $x_1 + x_3 + x_4 + 2 x_5 + x_6 + 2 x_7 \geq 3$ is a facet of the first CG closure $P^{(1)}$ (although not of the integer hull of $P$) and thus a valid pitch-$3$. Observe that the coefficient of $x_5$ is higher than $x_6$ in this pitch-$3$, while it is the opposite in \eqref{eq:pitch-3-wild}. Such situations we call \emph{inversions} and they do not occur in (relevant) pitch-2 inequalities. Inequality $x_1 + x_2 + 2 x_3 + 3 x_4 + 4 x_5 + 3 x_6 + 4 x_7 \geq 8$ is an inverted facet of both the integer hull and the first CG closure.

For later use (in Section \ref{sec:KC+bounded}), we observe here that when $I \subseteq [n]$ is fixed, we can \emph{efficiently} and \emph{exactly} solve the separation problem over inequalities with support $I$ just by solving an LP. Clearly, we are only interested in valid inequalities $\alpha^T x \geq 1$ with $\alpha \geq 0$ and points $0\leq x^*\leq 1$.
Let $\beta=1-p([n]\setminus I)$. We can assume $\beta>0$, otherwise there is no valid inequality as above with support $I$.  Call $J \subseteq I$  \emph{massive} if $\sum_{i \in J}p_i \geq \beta$. Consider the following LP:
\begin{equation}
\label{eq:fixed-support}
\begin{array}{lrlll}
\min \; & \sum_{i \in I} \alpha_i x^*_i \\
\hbox{s.t.}\\
& \sum_{i \in J}\alpha_i & \geq & 1 & \hbox{ for all massive $J\subseteq I$} \\
& \alpha & \geq & 0
\end{array}
\end{equation}
Note that, for each feasible solution $\bar \alpha$ to the previous LP, we have that $\bar \alpha^T x \geq 1$ is a valid inequality for the original \mk \ instance, and conversely that all inequalities with support $I$ can be obtained in this way. Hence, let $\alpha^*$ be the optimal solution to the previous LP. If $(\alpha^*)^T x^* < 1$, we obtain an inequality  whose support is contained in $I$, that is violated by $x^*$. The support of the inequality can be extended to $I$ by setting $\alpha_i=\epsilon$ for all $i \in I$ with $\alpha_i=0$.
On the other hand, if $(\alpha^*)^T x^* \geq 1$, $x^*$ satisfies all inequalities with support $I$.

\section{Integrality gap for \mk \ with bounded pitch inequalities}

\subsection{When $p=c$}\label{sec:p-c}

\begin{theorem}
	\label{thm:p-c}
	Consider an instance of \mk \ \eqref{eq:mink} with $p=c$. Denote by $K$ the linear relaxation of \eqref{eq:mink} to which all pitch-1 and pitch-2 inequalities have been added. The integrality gap of $K$ is at most $3/2$.
\end{theorem}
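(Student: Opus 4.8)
The plan is to compare the integer optimum $\text{OPT}$ with the optimal value of the strengthened relaxation $K$ directly, splitting on the size of $\text{OPT}$. Since the knapsack constraint $p^Tx\ge 1$ belongs to $K$ and $p=c$, every $x\in K$ satisfies $c^Tx=p^Tx\ge 1$, so the relaxation value is at least $1$. Hence if $\text{OPT}\le 3/2$ we are done immediately, and the whole content of the argument lies in the regime $\text{OPT}>3/2$.

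So assume $\text{OPT}>3/2$ and let $S^*$ be an optimal feasible set, which is necessarily inclusion-minimal because the weights are positive. First I would extract structure from minimality: removing the lightest element shows $p(S^*)<1+\min_{i\in S^*}p_i$, so $\min_{i\in S^*}p_i>\text{OPT}-1>1/2$; thus every item used by the optimum is \emph{large}, where I call $i$ large if $p_i>1/2$ and write $L=\{i:p_i>1/2\}$. Since any two large items already sum to more than $1$, minimality forces $|S^*|=2$ (a $3$rd large item could be dropped, contradicting optimality, while a single item has weight $\le 1<\text{OPT}$). As $p$ is sorted, $L=\{k,k+1,\dots,n\}$ is a suffix, and the two smallest large items give $\text{OPT}=p_k+p_{k+1}$ with $p_k>1/2$ and $p_{k+1}\le 1$.

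The key step is the valid inequality $\sum_{i\in L}x_i\ge 2$. I would establish its validity on the integer hull by ruling out feasible solutions using at most one large item. A feasible set $S$ with $S\cap L=\emptyset$ consists of small items only and contains a minimal feasible subset $S'$ with $p(S')<1+\min_{i\in S'}p_i\le 3/2<\text{OPT}$, contradicting optimality; and a feasible $S$ with $|S\cap L|=1$ must also contain a small item (a single large item has weight $\le 1<\text{OPT}$ unless it equals $1$, which would force $\text{OPT}\le 1$), so it too contains a minimal feasible subset whose smallest element is small and whose weight is below $3/2$, again a contradiction. Hence $|S\cap L|\ge 2$ for every feasible $S$, i.e. $\sum_{i\in L}x_i\ge 2$ is valid. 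Its coefficients are all $1$, so its pitch is exactly $2$, and therefore it is one of the inequalities added to form $K$ (equivalently, it is implied by the families in Lemma~\ref{lem:pitch-2}); in particular every $x\in K$ satisfies it.

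Finally I would combine these facts on the relaxation optimum $x^*$. Dropping the nonnegative small-item terms and bounding each large coefficient below by $p_k$,
\[
p^Tx^*\ \ge\ \sum_{i\in L}p_i x_i^*\ \ge\ p_k\sum_{i\in L}x_i^*\ \ge\ 2p_k .
\]
Since $p_k>1/2$ gives $2p_k>1\ge p_{k+1}$, we obtain $\text{OPT}=p_k+p_{k+1}\le 3p_k=\tfrac32(2p_k)\le \tfrac32\,p^Tx^*$, which is exactly the claimed gap of $3/2$. I expect the main obstacle to be the structural claim that every feasible integer solution uses at least two large items; once the minimality/overshoot argument pins this down, the single pitch-$2$ inequality $\sum_{i\in L}x_i\ge 2$ does all the work and the concluding estimate is immediate. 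A secondary point to verify carefully is that this inequality genuinely qualifies as pitch-$2$, so that it indeed belongs to $K$.
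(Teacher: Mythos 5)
Your proof is correct and follows essentially the same route as the paper: both arguments reduce to the case $\mathrm{OPT}>3/2$, show that the optimal integral solution consists of exactly two items of profit greater than $1/2$, and then let the single valid pitch-$2$ inequality $\sum_{i\in L}x_i\ge 2$ over the suffix $L$ of large items do all the work. The only differences are cosmetic: you certify validity of that inequality via the minimal-cover overshoot bound rather than the paper's explicit packing argument (which uses that all lighter items have profit below $1/4$), and your final estimate $\mathrm{LP}\ge 2p_k$ together with $\mathrm{OPT}\le 3p_k$ is slightly looser than the paper's observation that the LP value actually equals $p_k+p_{k+1}=\mathrm{OPT}$ in this regime, but both suffice.
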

\textbf{\emph{Proof.}}
Let $p=c$, and let $\bar x$ be the optimal integer solution to \eqref{eq:mink}. We can assume $p^T \bar x > 1.5$, else we are done.

\begin{myclaim}
	The support of $\bar x$ has size $2$.
\end{myclaim}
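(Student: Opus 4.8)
The plan is to argue by contradiction on the size of $S := \supp(\bar x)$, exploiting two ingredients: the optimality of $\bar x$, which forbids any feasible integer point of strictly smaller value, and the standing assumption $p^T \bar x > 3/2$. Since $p = c$, minimizing $c^T x$ is the same as minimizing the total profit $p^T x$, so $\bar x$ is a feasible set of \emph{minimum} total profit among those with profit at least $1$; the local moves I will use (dropping an element, or replacing the solution by a two-element set) therefore directly compare against this minimum.

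First I would dispose of the degenerate cases. The support cannot be empty, since $p^T \bar x > 3/2 > 0$. It cannot have size $1$ either: if $S=\{i\}$ then $p^T \bar x = p_i \le 1$ by the normalization $p_i \le 1$, contradicting $p^T \bar x > 3/2$. So it remains to rule out $|S| \ge 3$.

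The crux is a profit lower bound on support elements, which I would establish first: every $j \in S$ satisfies $p_j > 1/2$. Indeed, the integer point $\bar x - \chi^{\{j\}}$ obtained by dropping $j$ has value $p^T \bar x - p_j < p^T \bar x$ (strictly, as $p_j = c_j > 0$), so by optimality of $\bar x$ it must be \emph{infeasible}, i.e. $p^T \bar x - p_j < 1$, whence $p_j > p^T \bar x - 1 > 1/2$. Now suppose $|S| \ge 3$ and pick any two indices $a,b \in S$. Since $p_a, p_b > 1/2$ we have $p_a + p_b > 1$, so $\chi^{\{a,b\}}$ is feasible; and because $S$ contains a third index $c$ with $p_c > 0$, we get $p_a + p_b < p_a + p_b + p_c \le p^T \bar x$. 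Thus $\chi^{\{a,b\}}$ is a feasible integer solution of strictly smaller value than $\bar x$, contradicting optimality. Hence $|S| \le 2$, and combined with the previous paragraph $|S| = 2$.

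The argument is largely routine; the only point needing care is that the two-element replacement $\{a,b\}$ be \emph{simultaneously} feasible and of strictly smaller value. This is exactly where the two hypotheses enter: the lower bound $p_j > 1/2$ (itself a consequence of $p^T\bar x > 3/2$ together with optimality) gives feasibility of any pair, while the existence of a positive third support element gives the strict decrease. I do not anticipate a serious obstacle.
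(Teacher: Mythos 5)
Your proof is correct, but it takes a slightly different route from the paper's. The paper disposes of $k\ge 3$ with a single averaging step: remove the \emph{cheapest} element of the support to get $\bar x'$ with $p^T\bar x' \ge (1-\tfrac1k)p^T\bar x > \tfrac23\cdot 1.5 = 1$, so $\bar x'$ is feasible and strictly cheaper, a contradiction. You instead first prove the pointwise bound $p_j>1/2$ for every support element (dropping any single element must destroy feasibility, hence $p^T\bar x - p_j<1$), and then replace the whole solution by an arbitrary pair $\{a,b\}$ from the support, which is feasible because $p_a+p_b>1$ and strictly cheaper because a third positive-profit element was discarded. Both arguments are one-step exchange arguments against optimality; yours requires the intermediate lemma but has the advantage that the bound $p_j>1/2$ is exactly what the paper needs (and re-derives, for the two surviving elements) immediately after this claim, and your treatment of $k=1$ (namely $p^T\bar x=p_i\le 1<1.5$ by the normalization $p_i\le 1$) is more direct than the paper's remark that $\bar x$ would then coincide with the optimal fractional solution. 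One cosmetic point: avoid naming the third support index $c$, since $c$ already denotes the cost vector.
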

\begin{proof}
	Let $k$ be the size of the support of $\bar x$. If $k=1$, then $\bar x$ is also the optimal fractional solution. Now assume $k \geq 3$. Remove from $\bar x$ the cheapest item as to obtain $\bar x'$. We have
	$$p^T x' \geq \left( 1-\frac{1}{k} \right )p^T \bar x > \frac{2}{3} \cdot 1.5 = 1,$$
	contradicting the fact that $\bar x$ is the optimal integral solution.
	
	% \qed
	
\end{proof}

\smallskip

Hence, we can assume that the support of $\bar x$ is given by $\{i,j\}$, with $0<p_i\leq p_j\leq 1$. Since $p_i + p_j >1.5$, we deduce $p_j > .75$. Since $p_j\leq 1$, we deduce $p_i>.5$.

\begin{myclaim}
	Let $\ell < j$ and $\ell \neq i$. Then $p_\ell <.25$.
\end{myclaim}
\begin{proof}
	Recall that for $S \subseteq [n]$ we denote its characteristic vector with $\chi^{S}$. If $0.25 \leq p_\ell < p_i$, then $\chi^{\{\ell,j\}}$ is a feasible integral solution of cost strictly less than $\bar x$. Else if $0.5<p_i\leq p_\ell<p_j$, then $\chi^{\{\ell,i\}}$ is a feasible integral solution of cost strictly less than $\bar x$. In both cases we obtain a contradiction.
	
	% \qed
\end{proof}

Because of the previous claim, we can assume w.l.o.g. $j=i+1$.

\begin{myclaim}
	$p_n+\sum_{\ell=1}^{i-1}p_\ell<1$.
\end{myclaim}

\begin{proof}
	Suppose $p_n+\sum_{\ell=1}^{i-1}p_\ell\geq 1$. Since $p_\ell<.25$ for all $\ell=1,\dots,i-1$, there exists $k\leq i-1$ such that $x_n+\sum_{\ell=1}^{k}p_k \in [1,1.25[$. Hence $x^{\{1,\dots,k,n\}}$ is a feasible integer solution of cost at most $1.25$, a contradiction.	% \qed
	
\end{proof}

Because of the previous claim, the pitch-$2$ inequality $\sum_{k=i}^n x_k \geq 2$ is valid. The fractional solution of minimum cost that satisfies this inequality is the one that sets $x_i=x_{j}=1$ (since $j=i+1$) and all other variables to $0$. This is exactly $\bar x$.\qedd

\subsection{CG closures of bounded rank of the natural \mk \ relaxation}\label{sec:CG}

For $t \in \N$, let $K^t$ be the linear relaxation of the \eqref{eq:mink} given by: the original knapsack inequality; non-negativity constraints; all pitch-$k$ inequalities, for $k\leq t$.

\begin{lemma}\label{lem:IG-stays}
	For $t \geq 2$, the integrality gap of $K^t$ is at least $\max\{\frac{1}{2},\frac{t-2}{t-1}\}$ times the integrality gap of $K^{t-1}$. 	 \end{lemma}
\textbf{\emph{Proof.}}	
Fix $t \geq 2$, and let $C$ be the cost of the optimal integral solution to \eqref{eq:mink}. Let $C/v'$ be the integrality gap of $K^t$. Since $v'$ is the optimal value of $K^t$, by the strong duality theorem (and Caratheodory's theorem), there exist nonnegative multipliers $\alpha, \alpha_1,\dots, \alpha_n,\gamma_1,\dots, \gamma_{n+1}$ such that the inequality $c^T x \geq v'$ can be obtained as a conic combination of the original knapsack inequality (with multiplier $\alpha$), non-negativity constraints (with multipliers $\alpha{_1,\dots,\alpha_n}$), and at most $n+1$ inequalities of pitch at most $t$ (with multipliers $\gamma_1,\dots,\gamma_{n+1}$). By scaling, we can assume that the rhs of the latter inequalities is $1$. Hence $v'=\alpha + \sum_{i=1}^r \gamma_i$. 
%Proof of Claim \ref{cl:a-little-more-pitch} is given in the appendix.

\begin{myclaim}\label{cl:a-little-more-pitch}	Let $d^Tx \geq 1$ be a valid pitch-$t$ inequality for \eqref{eq:mink}, and assume w.l.o.g. that $d_1\leq d_2 \leq \dots \leq d_n$. Then inequality $\sum_{i=2}^{n} d_i x_i \geq  \max\{\frac{1}{2},\frac{t-2}{t-1}\}$ is a valid inequality of pitch at most $t-1$ for \eqref{eq:mink}.
\end{myclaim}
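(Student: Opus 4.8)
The plan is to extract two consequences from the hypothesis that $d^{T}x\ge 1$ has pitch \emph{exactly} $t$, and to combine them with a single averaging estimate on the smallest coefficients. Write $\mu:=\max\{\tfrac12,\tfrac{t-2}{t-1}\}$. First I would reduce to the case $d_1>0$, i.e.\ assume the support is all of $[n]$: deleting a zero coordinate leaves the inequality unchanged and does \emph{not} lower its pitch, so ``drop the smallest coefficient'' must be read as ``drop the smallest coefficient \emph{in the support}'', and after discarding null coordinates and relabelling this is without loss of generality. Now, since the pitch equals $t$, I record (a) every $t$-subset, and in particular the smallest, satisfies $d_1+\dots+d_t\ge 1$; and (b) the inequality is \emph{not} pitch-$(t-1)$, so its smallest $(t-1)$-subset obeys $d_1+\dots+d_{t-1}<1$, whence $d_1<\tfrac{1}{t-1}$ because $d_1$ is the least of those $t-1$ terms.

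Next I would prove the estimate that drives everything: $S:=d_2+\dots+d_t\ge \tfrac{t-1}{t}$. Since $d_2$ is the smallest of the $t-1$ terms averaging $S/(t-1)$, one has $d_1\le d_2\le \tfrac{S}{t-1}$, so $1\le d_1+S\le \tfrac{S}{t-1}+S=S\cdot\tfrac{t}{t-1}$, giving the bound; a trivial computation then yields $\tfrac{t-1}{t}\ge\mu$ for all $t\ge 2$ (both $\tfrac{t-1}{t}\ge\tfrac12$ and $\tfrac{t-1}{t}\ge\tfrac{t-2}{t-1}$ reduce to obvious inequalities). The pitch conclusion is now immediate: the new inequality has support $\{2,\dots,n\}$, whose smallest $(t-1)$-subset is $\{d_2,\dots,d_t\}$ with sum $S\ge\tfrac{t-1}{t}\ge\mu$; hence every $(t-1)$-subset sums to at least $\mu$, so the pitch is at most $t-1$.

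For validity I would test an arbitrary feasible $\bar x\in\{0,1\}^n$ with support $A$. If $\bar x_1=0$ then $\sum_{i\ge 2}d_i\bar x_i=\sum_{i\in A}d_i\ge 1\ge\mu$ by validity of $d^Tx\ge1$ applied to $\chi^{A}$. If $\bar x_1=1$ I split on $|A|$. When $|A|\ge t$, the set $A\setminus\{1\}$ contains at least $t-1$ elements of $\{2,\dots,n\}$, so $\sum_{i\in A\setminus\{1\}}d_i\ge d_2+\dots+d_t=S\ge\mu$. When $|A|\le t-1$, validity gives $\sum_{i\in A}d_i\ge 1$, hence $\sum_{i\in A\setminus\{1\}}d_i\ge 1-d_1>1-\tfrac{1}{t-1}=\tfrac{t-2}{t-1}$, which equals $\mu$ for $t\ge 3$; for $t=2$ the only such $A$ is $\{1\}$, which would force $d_1\ge 1$, contradicting $d_1<1$.

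I expect the main obstacle to be precisely this last, ``small-support'' branch of the validity check: when the feasible solution uses the deleted variable $x_1$ together with few others, the averaging estimate is unavailable and instead one must bound the dropped coefficient $d_1$ from above. This is the single place where pitch being \emph{exactly} $t$ (and not merely at most $t$) is indispensable, through $d_1<\tfrac{1}{t-1}$, and it is also what forces the asymmetric value $\mu=\max\{\tfrac12,\tfrac{t-2}{t-1}\}$: the term $\tfrac{t-2}{t-1}$ is produced by this branch, while the term $\tfrac12$ is the tight output of the $|A|\ge t$ averaging at $t=2$. A secondary but real subtlety is the support issue flagged at the outset, which is easy to overlook but genuinely needed, since for an inequality with spurious zero coefficients the literal statement ``drop $d_1$'' can fail.
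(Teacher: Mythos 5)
Your proof is correct, and it departs from the paper's argument in a substantive way. The paper works with the single quantity $1-d_1$: the inequality $\sum_{i\ge 2}d_ix_i\ge 1-d_1$ is valid and of pitch at most $t-1$ essentially for free, and the whole effort goes into showing $1-d_1\ge\max\{\tfrac{1}{2},\tfrac{t-2}{t-1}\}$. For $t\ge 3$ this is exactly your bound $d_1<\tfrac{1}{t-1}$; but for $t=2$ the paper must invoke Lemma \ref{lem:pitch-2} to assume ``w.l.o.g.''\ that $d_1=1/2$, a normalization that is only legitimate in the surrounding context of Lemma \ref{lem:IG-stays}, where each pitch-$2$ inequality in the conic combination may be replaced by a generator from Lemma \ref{lem:pitch-2} --- for an arbitrary valid pitch-$2$ inequality one can have $d_1>1/2$ and hence $1-d_1<1/2$. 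Your averaging estimate $d_2+\dots+d_t\ge\tfrac{t-1}{t}\ge\max\{\tfrac{1}{2},\tfrac{t-2}{t-1}\}$, combined with the three-way split on the support of the feasible point, closes exactly this gap: it delivers both the pitch bound and the validity bound from one estimate, uniformly in $t\ge 2$, and it establishes the claim as a standalone statement about arbitrary pitch-$2$ inequalities rather than only the normalized generators. The price is a longer validity check (the paper's is one line for $t\ge 3$). Your remark on zero coefficients is also accurate: under the paper's convention that a pitch-$t$ inequality is presented on its support (all $w_i>0$ for $i\in T$) the point is moot, but if $d$ is allowed zero entries then the pitch assertion (though not validity) of the literal statement can fail, and reading ``drop $d_1$'' as ``drop the smallest coefficient of the support'' is the correct fix.
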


\begin{proof}
%\section{Proof of Claim \ref{cl:a-little-more-pitch}}

	The inequality $\sum_{i=2}^n d_i x_i \geq 1 - d_1$ is a valid inequality for \eqref{eq:mink}, and by construction it is of pitch at most $t-1$. If $t\geq 3$, we obtain $\sum_{i=1}^{t-1} d_i < 1$ and consequently $d_1 < 1 / (t-1)$, from which we deduce
	$$ 1 - d_1 > 1 - \frac{1}{t-1} =  \frac{t-2}{t-1}\geq \frac{1}{2}.$$
	If conversely $t=2$, by Lemma \ref{lem:pitch-2} we can assume w.l.o.g. that $d_1=1/2$, and we can conclude $1 -d_1=\frac{1}{2}> \frac{t-2}{t-1}$.% \qed
\end{proof}

Now consider the conic combination with multipliers $\alpha,\alpha_1,\dots,\alpha_n, \gamma_1,\dots,\gamma_{n+1}$ given above, where each inequality of pitch-$t$ is replaced with the inequality of pitch at most $t-1$ obtained using Claim \ref{cl:a-little-more-pitch}. We obtain an inequality $(c')^Tx \geq v''$, where one immediately checks that $c'\leq c$ and
$$\begin{array}{lll}v'' & \geq & \alpha + \sum_{i=1}^{n+1} \gamma_i  \max\left\{\frac{1}{2},\frac{t-2}{t-1}\right\} \geq   \max\left\{\frac{1}{2},\frac{t-2}{t-1}\right\} \left(\alpha + \sum_{i=1}^{n+1} \gamma_i\right) \\ & = &  \max\left\{\frac{1}{2},\frac{t-2}{t-1}\right\} v'.\end{array}$$
Hence the integrality gap of $K^t$ is
$$\frac{C}{v'}\geq \frac{C}{v''}\max\left\{\frac{1}{2},\frac{t-2}{t-1}\right\}$$
and the thesis follows since the integrality gap of $K^{t-1}$ is at most $C/v''$.

\qedd

\begin{lemma}\label{lem:knapsack}
	
	For a fixed $\epsilon>0$ and square integers $n\geq 4$, consider the \mk\, instance $K$ defined as follows:
	$$\begin{array}{lrrrrrrll}
	\min & & \epsilon y  & + & \sqrt{n} z & + & \sum_{i=1}^n x_i\\
	& \hbox{st} \\
	&& (n-\sqrt{n}) y & + & \frac{n}{2} z & + & \sum_{i=1}^n x_i & \geq n  \\ \\[-.1ex]
	&& y,& & z, & & x & \in \{0,1\}.\end{array}$$
	For every fixed $t \in \N$, the integrality gap of $K^t$ is $\Omega(\sqrt{n})$.
\end{lemma}
\textbf{\emph{Proof.}}
Because of Lemma \ref{lem:IG-stays}, it is enough to show that the integrality gap of $K^1$ is $\Omega(\sqrt{n})$. Clearly, the value of the integral optimal solution of the instance is $\sqrt{n}+\epsilon$. We claim that the fractional solution
$$(\bar y,\bar x,\bar z)=\left(1,\underbrace{\frac{1}{n-\sqrt{n}+1},\dots,\frac{1}{n-\sqrt{n}+1}}_{n \hbox{ times}},\frac{2}{\sqrt{n}}\right)$$
is a feasible point of $K^1$. Since $\epsilon \bar y  + \sqrt{n} \bar z + \sum_{i=1}^n \bar x_i =\epsilon + 2 + \frac{n}{n-\sqrt{n}+1}$, the thesis follows.

Observe that $(n-\sqrt{n}) \bar y +  \frac{n}{2} \bar z +\sum_{i=1}^n \bar x_i=(n-\sqrt{n}) + \frac{n}{2} \frac{2}{\sqrt{n}} + \frac{n}{n-\sqrt{n}+1}>n$, hence $(\bar y,\bar x,\bar z)$ satisfies the original knapsack inequality.

Now consider a valid pitch-1 inequality whose support contains $y$. Since $\bar y=1$, $(\bar y,\bar x,\bar z)$ satisfies this inequality. Hence, the only pitch-1 inequalities of interest do not have $y$ in the support. Note that such inequalities must have $z$ in the support, and some of the $x_i$. Hence, all those inequalities are dominated by the valid pitch-$1$ inequalities
$$z + \sum_{i \in I}x_i \geq 1 \; \forall I \subseteq [n], |I|=n - \sqrt{n} +1,$$ which are clearly satisfied by $(\bar y,\bar x,\bar z)$.

\qedd

\begin{theorem}
	For a fixed $q \in \N$, let CG$^q(K)$ be the $q-$th CG closure of the \mk\, instance $K$ as defined in Lemma \ref{lem:knapsack}. The integrality gap of CG$^q(K)$ is $\Omega(\sqrt{n})$.
\end{theorem}
\textbf{\emph{Proof.}}
We will use the following fact, proved (for a generic covering problem) in \cite{Bienstock2006}. Let $t,q \in \N$ and suppose $(\bar y,\bar z, \bar x) \in  K^t$. Define point
$(y',x',z')$, where each component is the minimum between $1$ and $(\frac{t+1}{t})^q$ times the corresponding component of $(\bar y,\bar z, \bar x)$. Then $(y',x',z') \in CG^q(K)$. Now fix $t,q$. We have therefore that
$$\begin{array}{lll}\epsilon y'   + \sqrt{n} z' + \sum_{i=1}^n x'_i & \leq &  \left(\frac{t+1}{t}\right)^q \left(\epsilon \bar y   + \sqrt{n} \bar z+  \sum_{i=1}^n \bar x_i\right) \\ &
= & \left(\frac{t+1}{t}\right)^q \left(\epsilon + 2 + \frac{n}{n-\sqrt{n}+1}\right)\end{array}$$
and the claim follows in a similar fashion to the proof of Lemma \ref{lem:knapsack}.\qedd

\subsection{When all knapsack cover inequalities are added}
\label{sec:kc-gap-2}

In this section we consider the min-knapsack formulation with knapsack cover inequalities (we use KC to denote this LP formulation).
In \cite{Carr2000} it is shown that the integrality gap of KC is 2. In the following we provide a simpler proof.

Let $\bar x$ be a feasible fractional solution for KC of cost $C(\bar x)$. Starting from $\bar x$, we show a simple and fast rounding procedure to obtain a feasible integral solution of cost at most $2C(\bar x)$.
\paragraph{The rounding procedure:}
Let $S=\{i\in [n]:\bar x_i\geq 1/2\}$. Set $x_i=1$ for any $i\in S$. Consider the residual variables $\bs:=[n]\setminus S$.
The problem is to assign integral values to the the residual variables. We call this problem the \emph{residual problem} (RP). By abusing notation, from now on, let $\bar x$ denote the fractional solution for KC restricted to residual variables.
Consider the following \emph{residual relaxation} (RR):
\begin{align}
\min &\sum_{i\in \bs} C_i x_i\\
s.t. &\sum_{i\in \bs}p_i' x_i \geq b'\\
& 0\leq x_i\leq 1/2
\end{align}
where $p'_i=\min\{p_i, b-p(S)\}$ and $b'=b-p(S)$. Note that $\bar x$ satisfies (RR). So if $x^*$ is the optimal solution of (RR) than it follows that
$$ \sum_{i\in \bs} C_i x_i^* \leq \sum_{i\in \bs} C_i \bar x_i$$
Therefore, if it exists an integral solution $x^{int}$ to (RP) of cost $C(x^{int})\leq 2C(x^*)$ then $C(x^{int})\leq 2C(\bar x)$ and we are done.
We can rewrite the \emph{residual relaxation} (RR) in the following equivalent way:
\begin{align}
\min &\sum_{i\in \bs} \frac{C_i}{2} y_i\\
s.t. &\sum_{i\in \bs}\frac{p_i'}{2} y_i \geq b'\\
& 0\leq y_i\leq 1
\end{align}
Clearly the optimal fractional solution to (RR) can be obtained by ordering the variables according to their densities. W.l.o.g., assume that $\frac{C_1}{p_1'}\leq \frac{C_2}{p_2'}\leq \ldots \leq \frac{C_{|\bs|}}{p_{|\bs|}'} $ and let $t+1\in[|\bs|]$ be the smallest integer such that
$\sum_{i= 1}^{t+1}\frac{p_i'}{2} \geq b'$ and therefore (recall $p'_i\leq b'$):
\begin{align}\label{eq:prof}
\sum_{i= 1}^{t}p_i' \geq 2b'-p'_{t+1}\geq b'
\end{align}
Note that the optimal fractional solution to (RR) picks the first $t$ variables integrally and the last $t+1$ potentially fractional (but it could be integral). It follows that:
\begin{align}\label{eq:cost}
\sum_{i=1}^t \frac{C_i}{2} \leq \sum_{i\in \bs} C_i x_i^*
\end{align}

It follows that the integral solution $x^{int}$ obtained by setting $x_1=1$ for $i\in [t]$ and zero otherwise is feasible by $\eqref{eq:prof}$ and of cost at most twice the optimal fractional of (RR) by \eqref{eq:cost}.

\subsection{When all bounded pitch and knapsack cover inequalities are added}\label{sec:KC+bounded}

Consider the following \mk\ instance with $\epsilon_n = \frac{1}{\sqrt{n}}$:

\begin{align}
\label{eq:ola}
\begin{split}
\min \quad & \sum_{i \in [n]} x_i + \frac{1}{\sqrt{n}} \sum_{j \in [n]} z_j  \\
\hbox{s.t.} \quad & \sum_{i \in [n]} x_i + \frac{1}{n} \sum_{j \in [n]} z_j \geq 1 + \epsilon_n \\
\ & x, z \in \{0,1\}^n.
\end{split}
\end{align}

\begin{lemma}
	For any fixed $k \in \N$ and $n \in \N$ sufficiently large, point $(\bar x ,\bar z) \in \R^{2n}$ with $\bar x_i = \frac{1+\epsilon_n}{n}, \bar z_i = \frac{k}{n}$ satisfies the natural linear relaxation, all KC and all inequalities of pitch at most $k$ valid for \eqref{eq:ola}. Observing that the optimal integral solution is $2$, this gives an IG of $\frac{2}{1 + \frac{k}{n}} \approx 2$.
\end{lemma}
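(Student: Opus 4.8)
The plan is to treat the three families separately, after the trivial check that $(\bar x,\bar z)$ lies in the natural relaxation: indeed $\sum_i \bar x_i + \tfrac1n\sum_j \bar z_j = (1+\epsilon_n)+\tfrac{k}{n}\ge 1+\epsilon_n$ and $0\le \bar x_i,\bar z_j\le 1$ for $n\ge k$. Since the objective value at $(\bar x,\bar z)$ is $(1+\epsilon_n)+\tfrac{k}{\sqrt n}=1+\tfrac{k+1}{\sqrt n}\to 1$, while any integral solution needs either two $x$-variables or one $x$-variable together with at least $\lceil\sqrt n\rceil$ of the $z$-variables (one $x$ alone gives profit $1<1+\epsilon_n$), hence has cost at least $2$, the integrality gap tends to $2$. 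It then remains to verify that $(\bar x,\bar z)$ satisfies every KC inequality and every valid inequality of pitch at most $k$.

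For the KC inequalities I would classify the generating set $S$ by how many $x$-variables it contains. If $S$ contains two or more $x$-variables then $p(S)\ge 2>1+\epsilon_n$, so $\beta\le 0$ and there is no inequality. If $S$ contains exactly one $x$-variable (and, forced by $\beta>0$, fewer than $\sqrt n$ of the $z$-variables) then every $x$-variable outside $S$ receives coefficient $\min\{1,\beta\}=\beta$, and the $x$-part evaluated at $\bar x$ equals $\beta(1+\epsilon_n)\tfrac{n-1}{n}\ge\beta$ because $(1+\epsilon_n)\tfrac{n-1}{n}\ge 1$ for $n\ge 3$; the non-negative $z$-part only helps. The case where $S$ contains no $x$-variable is analogous: the $x$-coefficients are $\min\{1,\beta\}$ and $\min\{1,\beta\}(1+\epsilon_n)\ge\beta$ in both regimes $\beta\le 1$ and $\beta>1$. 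Thus every KC inequality is already satisfied by the $x$-coordinates alone.

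The heart of the argument is the pitch-$\le k$ family; here I would normalise the right-hand side to $1$ and write the inequality as $\sum_{i\in T_x} w_i x_i+\sum_{j\in T_z} u_j z_j\ge 1$ with positive coefficients on the supports $T_x,T_z$. Two observations drive the proof. First, feasibility of the two-element solution formed by any two $x$-variables shows that at most one $x$-variable can lie outside $T_x$ (otherwise two excluded ones give a feasible point of value $0$); and if exactly one is excluded, pairing it with each remaining $x$-variable forces all in-support $x$-coefficients to be $\ge 1$, so the $x$-part alone is at least $(1+\epsilon_n)\tfrac{n-1}{n}\ge 1$. Second, in the remaining case $T_x=[n]$, let $\sigma$ be the sum of the $\lceil\sqrt n\rceil$ smallest $z$-coefficients (counting zeros for $z$-variables outside $T_z$). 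The feasible solution consisting of one $x$-variable and the $\lceil\sqrt n\rceil$ cheapest $z$-variables yields $w_i\ge 1-\sigma$ for every $i$, while the two-$x$ solutions yield $W_x:=\sum_i w_i\ge (n-1)/2$.

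The decisive step, and the main obstacle, is to convert the pitch bound into a lower bound on the $z$-part and balance it against the $x$-part, since neither part reaches $1$ on its own. If $\sigma=0$, then $w_i\ge 1$ for all $i$ and the $x$-part alone is $\ge 1+\epsilon_n$. If $\sigma>0$, then strictly fewer than $\lceil\sqrt n\rceil$ of the $z$-variables lie outside $T_z$, so almost all $z$-variables are in the support; now the pitch hypothesis (after scaling, every $k$-subset of coefficients sums to $\ge 1$, hence the $k$-th smallest positive coefficient is $\ge 1/k$ and at most $k-1$ positive coefficients fall below $1/k$) shows that all but at most $k-1$ of the in-support $z$-coefficients are $\ge 1/k$. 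Consequently $W_z\ge (n-\lceil\sqrt n\rceil-k+2)/k$, so the $z$-part $\tfrac kn W_z$ is at least $1-\tfrac{\lceil\sqrt n\rceil+k}{n}$, which tends to $1$; adding the $x$-part, which is at least $\tfrac{(1+\epsilon_n)(n-1)}{2n}\approx\tfrac12$, pushes the total above $1$ once $n$ is large in terms of $k$ (concretely once $\tfrac{n-1}{2}\ge\sqrt n+k$). I expect this trade-off—quantifying exactly how the pitch constraint forces the $z$-coefficients to become macroscopic precisely when the bound $w_i\ge 1-\sigma$ degrades—to be the only genuinely delicate point; everything else is bookkeeping.
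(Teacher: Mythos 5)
Your proof is correct, and its skeleton — feasibility in the natural relaxation, the value $2$ of the integer optimum, the case analysis on how many $x$-variables the KC-generating set omits, and the observation that a valid inequality can exclude at most one $x$-variable from its support (and that excluding exactly one forces all remaining $x$-coefficients up to $\beta$) — coincides with the paper's. Where you genuinely diverge is in the hard case where all $x$-variables and almost all $z$-variables lie in the support. The paper disposes of lower-pitch inequalities by an induction on $k$ (the point for $k$ dominates the point for $k-1$ componentwise) and then handles pitch-$k$ inequalities by specializing the fixed-support separation LP \eqref{eq:fixed-support}: it keeps only the constraints $\alpha_i+\alpha_{\bar\imath}\ge 1$ coming from two-$x$ feasible points and $\sum_{j\in K}\alpha_j\ge 1$ for $k$-subsets $K$ of the $z$-support coming from the pitch bound, guesses the minimizer $\bar\alpha_i=1/2$, $\bar\alpha_j=1/k$, and certifies optimality by an explicit dual combination, so that the minimum value $\tfrac{1+\epsilon_n}{2}+\tfrac{|J|}{n}$ exceeds $1$. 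You aggregate the very same two families of constraints directly: at most one $x$-coefficient is below $1/2$, so the $x$-part contributes roughly $1/2$; the pitch condition forces all but at most $k-1$ support coefficients to be at least $1/k$, and since in this case fewer than $\lceil\sqrt n\rceil$ $z$-variables can be outside the support, the $z$-part contributes $1-O\bigl((\sqrt n+k)/n\bigr)$. The resulting numerical bound is the same, but your version needs neither the LP-duality packaging nor the induction, since you correctly note that a pitch-$\le k$ inequality already has every $k$-subset of its support coefficients summing to at least the right-hand side. Your dichotomy $\sigma=0$ versus $\sigma>0$ is the paper's dichotomy $|J|\le n(1-\epsilon_n)$ versus $|J|>n(1-\epsilon_n)$ in disguise, and your single-$x$-plus-cheapest-$z$'s test point plays the role of the paper's $(\chi^{\{i^*\}},\chi^{[n]\setminus J})$. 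In short: same decomposition, but a more elementary and self-contained treatment of the decisive case.
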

\textbf{\emph{Proof.}}
We prove the statement by induction. Fix $k \in \N$. Note that $(\bar{x},\bar{z})$ dominates componentwise the point generated at step $k-1$, and the latter by induction hypothesis satisfies all inequalities of pitch at most $k-1$. Let
\begin{equation}
\label{eq:ola-form}
\sum_{i \in I} w_i x_i + \sum_{j \in J} w_j z_j \geq \beta
\end{equation}
be a valid KC or pitch-$k$ inequality with support $I \cup J$, which gives that $w_i, w_j \in \R_{>0}, \ \forall i \in I, \ \forall j \in J$ and $\beta > 0$. Observe the following.

\begin{myclaim}
	\label{cl:size-of-I}
	%For a valid pitch-$k$ inequality \eqref{eq:ola-form},
	$|I| \geq n-1$. In addition, $|I| = n-1$ or $|J| \leq n (1-\epsilon_n)$ implies $w_i \geq \beta, \ \forall i \in I$.
\end{myclaim}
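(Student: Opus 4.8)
The plan is to derive both parts purely from the \emph{validity} of the inequality \eqref{eq:ola-form}, exhibiting in each case a feasible integral solution of \eqref{eq:ola} on which the left-hand side is too small whenever the desired conclusion fails. Throughout I use that $\epsilon_n = 1/\sqrt{n} < 1$ for $n \geq 1$, so that $1 + \epsilon_n < 2$; this makes it easy to reach feasibility with very few variables. Notably, the argument never invokes the KC/pitch-$k$ structure — validity alone suffices.

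For the bound $|I| \geq n-1$, I argue by contradiction. If $|I| \leq n-2$, then $[n] \setminus I$ contains two indices $i_1, i_2$. Setting $x_{i_1} = x_{i_2} = 1$ and every other variable to $0$ yields total profit $2 \geq 1 + \epsilon_n$, hence a feasible integral point; but since $i_1, i_2 \notin I$ and all $z$-variables vanish, the left-hand side of \eqref{eq:ola-form} evaluates to $0 < \beta$, contradicting validity. Therefore $|I| \geq n-1$.

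For the second part, I fix an arbitrary $i^* \in I$ and build a feasible solution in which $x_{i^*}$ is the \emph{only} support variable set to $1$; validity then forces $w_{i^*} \geq \beta$, and since $i^*$ is arbitrary this gives $w_i \geq \beta$ for all $i \in I$. If $|I| = n-1$, let $i'$ be the unique index outside $I$ and set $x_{i^*} = x_{i'} = 1$: this is feasible (profit $2$), and since $i' \notin I$ only $x_{i^*}$ contributes to the left-hand side, so $w_{i^*} \geq \beta$. If instead $|J| \leq n(1-\epsilon_n) = n - \sqrt{n}$, then at least $\sqrt{n}$ of the $z$-variables lie outside $J$; I set $x_{i^*} = 1$ together with $\lceil \sqrt{n} \rceil$ of these $z$-variables, which is feasible because $1 + \lceil \sqrt{n}\rceil / n \geq 1 + 1/\sqrt{n} = 1 + \epsilon_n$, while again only $x_{i^*}$ lies in the support $I \cup J$. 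In both cases the left-hand side equals $w_{i^*}$, whence $w_{i^*} \geq \beta$.

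Since everything reduces to selecting the right feasible witness, there is no genuine obstacle; the only point requiring care is the bookkeeping in the second case, namely that there are truly at least $\lceil \sqrt{n}\rceil$ indices in $[n]\setminus J$ and that this rounded count still meets the profit threshold $1+\epsilon_n$. This is exactly where the choices $\epsilon_n = 1/\sqrt{n}$ and the cutoff $n(1-\epsilon_n)$ are used, and it is routine: $n - |J| \geq \sqrt{n}$ is an integer, hence $\geq \lceil \sqrt{n}\rceil$.
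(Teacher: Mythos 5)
Your proof is correct and follows essentially the same strategy as the paper's: in each case you exhibit a feasible integral witness of \eqref{eq:ola} whose support meets $I \cup J$ too little, contradicting validity of \eqref{eq:ola-form}. The only cosmetic difference is that the paper's witnesses take \emph{all} of $[n]\setminus I$ (resp.\ all $z_j$ with $j \in [n]\setminus J$) rather than just two indices (resp.\ $\lceil\sqrt{n}\rceil$ of them), which changes nothing.
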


\begin{proof}
	Since all coefficients in \eqref{eq:ola-form} are strictly positive and $\beta > 0$, $|I| \leq n-2$ gives that the feasible solution $(\chi^{[n] \setminus I}, \vec{0})$ for \eqref{eq:ola} is cut off by \eqref{eq:ola-form}, a contradiction.
	
	Furthermore, if $|I| = n-1$ and $w_{i^*} < \beta$ for some $i^* \in I$, then $(\chi^{([n] \setminus I) \cup \{i^* \} }, \vec{0})$ is cut off, again a contradiction.
	Finally, if $|I| = n$, $|J| \leq n (1-\epsilon_n)$ and $w_{i^*} < \beta$ for some $i^* \in I$, then $(\chi^{\{i^* \}}, \chi^{\{[n] \setminus J \}})$ does not satisfy \eqref{eq:ola-form}, but it is feasible in \eqref{eq:ola}.
	%
	% \qed
\end{proof}

We first show the statement for \eqref{eq:ola-form} being a KC. By the definition of KC:
$
\beta = 1 + \epsilon_n - |[n] \setminus I|-\frac{|[n] \setminus {J}|}{n}
$, $w_i=\min\{1, \beta\}, \ \forall i \in I$ and $w_j=\min\{\frac{1}{n}, \beta\}, \ \forall j \in J$. If $I=[n]$, then $\sum_{i \in I} w_i \bar x_i = \min\{1, \beta\} \cdot (1+\epsilon_n) \geq \beta$ since $\beta \leq 1 + \epsilon_n$. Otherwise, $|[n] \setminus I|=1$ so $w_i=\beta, \ \forall i \in I$ and $\sum_{i \in I} w_i \bar x_i= \beta \frac{(n-1)(1+\epsilon_n)}{n} > \beta$ for sufficiently large $n$.

Conversely, let \eqref{eq:ola-form} be a valid pitch-$k$ inequality. By Claim \ref{cl:size-of-I}, if $|I| = n-1$ or $|J| \leq n (1-\epsilon_n)$ then $w_i \geq\beta, \ \forall i \in I$ so the proof is analogous to the one for KC. Otherwise, $|I|=n$ and $|J| > n (1-\epsilon_n)$.
Consider the LP \eqref{eq:fixed-support} in Section \ref{sec:separation} specialized for our case -- that is, we want to detect if $(\bar{x},\bar{z})$ can be separated via an inequality with support $I \cup J$.
%and \eqref{eq:ola-form} is scaled so that $\beta = 1$.
Since $(\chi^{i,\bar \imath},\mathbf{0})$ is feasible in \eqref{eq:ola} for $i, \bar \imath \in I$, then \begin{equation}\label{eq:alphai}\alpha_i + \alpha_{\bar  \imath} \geq 1.\end{equation} Furthermore, for $n$ large enough one has $|J| > n (1-\epsilon_n) \geq k$ so
\begin{equation}\label{eq:alphaj}\sum_{j \in K} \alpha_j \geq 1\end{equation} for any $k$-subset $K$ of $J$. We claim that the minimum in \eqref{eq:fixed-support} is attained at $\bar \alpha_i=1/2, \ \forall i \in I$ and $\bar \alpha_j=1/k, \ \forall j \in J$. Indeed, the objective function of \eqref{eq:fixed-support} computed in $\bar \alpha$ is given by
$$ |I| \cdot \frac{1}{2} \cdot \frac{1+\epsilon}{n} + |J| \cdot \frac{1}{k} \cdot \frac{k}{n} = \frac{1+\epsilon_n}{2} + \frac{|J|}{n}.$$
On the other hand, by summing \eqref{eq:alphai} for all possible pairs with multipliers $\frac{1+\epsilon_n}{2(n-1)}$ and \eqref{eq:alphaj} for all subsets of $J$ of size $k$ with multipliers ${|J|-1 \choose {k-1}}^{-1} \cdot \frac{k}{n}$, simple linear algebra calculations lead to
$$\sum_{i \in I}\bar x_i \alpha_i + \sum_{j \in J}\bar z_j \alpha_j \geq \frac{1+\epsilon_n}{2} + \frac{|J|}{n},$$
showing the optimality of $\bar \alpha$. Recalling $|J|>n(1-\epsilon_n)$, we conclude that
$$\sum_{i \in I}\bar x_i \bar \alpha_i + \sum_{j \in J}\bar z_j \bar \alpha_j = \frac{1+\epsilon_n}{2} + \frac{|J|}{n} > \frac{1+\epsilon_n}{2} + 1 -\epsilon_n>1,$$
hence $(\bar{x},\bar{z})$ satisfies all inequalities with support $I \cup J$. \qedd

\bigskip

{\bf Acknowledgments.} Some of the work was done when the second and the third author visited the IEOR department of Columbia University, partially funded by a gift of the  Swiss National Science Foundation.

\bibliographystyle{plain}
\bibliography{citation}

%\newpage

%\appendix

\end{document}